%% LyX 1.6.4 created this file.  For more info, see http://www.lyx.org/.
%% Do not edit unless you really know what you are doing.
\documentclass[french,english]{article}
\usepackage[T1]{fontenc}
\usepackage[utf8]{inputenc}
\usepackage[letterpaper]{geometry}
\geometry{verbose,tmargin=3cm,bmargin=3cm,lmargin=3cm,rmargin=3cm}
\usepackage{amsthm}
\usepackage{amsmath}
\usepackage{graphicx}

\makeatletter

%%%%%%%%%%%%%%%%%%%%%%%%%%%%%% LyX specific LaTeX commands.
%% Because html converters don't know tabularnewline
\providecommand{\tabularnewline}{\\}
%% A simple dot to overcome graphicx limitations

%%%%%%%%%%%%%%%%%%%%%%%%%%%%%% Textclass specific LaTeX commands.
\theoremstyle{plain}
\newtheorem{thm}{Theorem}
\theoremstyle{plain}
\newtheorem{fact}[thm]{Fact}
 \ifx\proof\undefined\
   \newenvironment{proof}[1][\proofname]{\par
     \normalfont\topsep6\p@\@plus6\p@\relax
     \trivlist
     \itemindent\parindent
     \item[\hskip\labelsep
           \scshape
       #1]\ignorespaces
   }{%
     \endtrivlist\@endpefalse
   }
   \providecommand{\proofname}{Proof}
 \fi
\theoremstyle{plain}
\newtheorem{conjecture}[thm]{Conjecture}
\theoremstyle{remark}
\newtheorem{rem}[thm]{Remark}

%%%%%%%%%%%%%%%%%%%%%%%%%%%%%% User specified LaTeX commands.
\usepackage{graphics,graphicx}

\newcommand{\sat}{\textsf{SAT}}
\newcommand{\ksat}{\textsf{k-SAT}}

\newcommand{\threesat}{\textsf{3-SAT}}
\newcommand{\threenae}{\textsf{3-NAE-SAT}}
\newcommand{\knae}{\textsf{k-NAE-SAT}}

\newcommand{\kcnf}{\textsf{k-CNF}}

\newcommand{\nae}{\textsf{NAE-SAT}}

\newcommand{\mathematica}{\textsf{Mathematica}}
\newcommand{\satlab}{\textsf{SATLab}}

\makeatother

\usepackage{babel}
\addto\extrasfrench{\providecommand{\fg}{\ifdim\lastskip>\z@\unskip\fi~\frqq}}

\begin{document}

\title{Second Moment Method on \ksat: a General Framework}

\author{Thomas Hugel and Yacine Boufkhad}

\maketitle
\begin{center}
LIAFA - Université Denis Diderot Paris 7 \& CNRS
\par
case 7014 - 75205 Paris Cedex 13
\end{center}

\begin{abstract}
We give a general framework implementing the Second Moment Method on k-SAT and discuss the conditions making the Second Moment Method work in this framework. As applications, we make the Second Moment Method work on boolean solutions and implicants. We extend this to the distributional model of k-SAT.
\end{abstract}

\section{Introduction to the Second Moment Method\index{Second Moment Method}\label{cha:SMM}}

\noindent Just as the First Moment Method is a way to upper-bound
the threshold of \ksat, so is the Second Moment Method a way to lower-bound
it. After a brief technical introduction to the Second Moment Method
in section \ref{sec:How-the-SMM}, we present in section \ref{sec:Survey-SMM}
a survey of the early attempts to lower-bound the threshold of \threesat{}
through the Second Moment Method. As in the First Moment Method, the
general idea is to count special kinds of solutions. However, the
selection of solutions is not the same as in the First Moment Method:
\begin{enumerate}
\item in the First Moment Method, we considered random variables $X$ such
that satisfiability implies $X\geq1$; setting $X$ to be the number
of solutions yields an upper bound of $5.191$ for the threshold of
\threesat; and we tried to select the least solutions;
\item in the Second Moment Method, we shall consider random variables $X$
such that unsatisfiability implies $X=0$ (see section \ref{sec:How-the-SMM});
setting $X$ to be the number of solutions yields a lower bound of
$0$ for the threshold of \threesat{} (see section \ref{sub:Second-Moment-of-Solutions});
here the criterion to select solutions is quite different: we are
interested in subsets of solutions having low variance.
\end{enumerate}
~

In a breakthrough paper, Achlioptas \& Peres - 2004 \cite{Achlioptas2004}
succeeded with the Second Moment Method on \ksat, establishing a
lower bound of $2.68$ for \threesat{} and an asymptotically tight
lower bound of $2^{k}\ln2-O\left(k\right)$ for \ksat. It turns out
that the currently best lower bound of the \threesat{} threshold
($3.52$) was obtained by another way: analyzing algorithms finding
solutions with high probability, cf.~Kaporis, Kirousis \& Lalas \cite{Kaporis2006}
and Hajiaghayi \& Sorkin \cite{Hajiaghayi2003}.

The purpose of the next chapters is to make the most of the Second
Moment Method on \ksat. To do so we take a different approach from
Achlioptas \& Peres'. In our framework we select solutions according
to the fraction of variables assigned $1$ and the fractions of the
different types of clauses (i.e.~the number of true literals occurrences
in the clauses). This framework is general enough to include boolean
solutions, implicants etc. However, tuning our parameters the best
way we could, we got numerical evidence that we could not obtain better
lower bounds than $2.83$.

The stumbling block we recurrently encountered is what we call the
independence point. It corresponds to the couples of independent solutions
in the subset of selected solutions. Even though solutions are independent,
the proportion of literals occurrences having a certain truth value
may not be independent between solutions. We got numerical evidence
that the Second Moment Method does not work if truth values of literals
occurrences are not independent. On the other hand, when they are
independent, we give a necessary condition for the Second Moment Method
to work, taking into account just the exponential equivalent of the
second moment at this point. This condition tells us that we must
select solutions having equal true and false surfaces (the surface
is just the total number of literals occurrences), which is very artificial
with respect to what we can observe with \satlab{} \cite{Hugel2010a}.
Using this condition, we could make the Second Moment Method work
numerically; however, since the lower bounds we get ($2.83$) are
far below the currently best lower bound ($3.52$), we do not give
a rigorous (and tedious) proof of our lower bounds (to do so, the
exponential equivalent would not be enough, and calculations would
become quite involved).

The very restrictive conditions we encountered to make the Second
Moment Method work may be due to some weaknesses of our framework.
We do not claim that the Second Moment Method is doomed to perpetual
failure on \ksat. We only hope to shed a small ray of light onto
it. This work is very fresh, still in progress, and has not been published.

In section \ref{cha:General-Framework-SMM} we present a general framework
implementing the Second Moment Method on random \ksat{} under various
settings (distributions of signs, implicants…). Section \ref{cha:Distributional-SM}
is dedicated to a variation of the framework presented in section
\ref{cha:General-Framework-SMM} for the needs of distributional models.

\subsection{How the Second Moment Method Is Supposed to Work\label{sec:How-the-SMM}}

Let us recall how the Second Moment Method is supposed to work: given
an event $A$, we want to show that $\mathrm{Pr}\left(A\right)$ tends
to $1$ but we don't have access to $\mathrm{Pr}\left(A\right)$.
Instead we use the first and the second moments of a non-negative
random variable $X$ such that $\mathrm{Pr}\left(A\right)\geq\mathrm{Pr}\left(X>0\right)$,
i.e. $X$ must be $0$ when $A$ does not hold. For our problem \threesat,
$A$ is the event {}``a formula is satisfiable''. The simplest choice
for $X$ is of course the number of solutions.
\begin{enumerate}
\item The first thing to notice is that if we show that $\mathrm{Pr}\left(A\right)$
is lower-bounded by a positive constant, then it tends to $1$. Why?
Because Friedgut \& Bourgain's theorem \cite{Friedgut1999} established
a sharp threshold for random \ksat;
\item In order to prove that $\mathrm{Pr}\left(X>0\right)$ is bounded away
from zero, we use the following classical identity: \begin{eqnarray}
\mathrm{Pr}\left(X>0\right) & \geq & \frac{\left(\mathrm{E}X\right)^{2}}{\mathrm{E}X^{2}}\enskip.\label{eq:SMCS}\end{eqnarray}
To prove it, use the fact that $\mathrm{E}X=\mathrm{E}\left(X\boldsymbol{1}_{X>0}\right)$
since $X\geq0$, and apply the Cauchy-Schwartz inequality to it: $\left(\mathrm{E}X\right)^{2}\leq\mathrm{E}X^{2}\mathrm{E}\boldsymbol{1}_{X>0}^{2}$.
So in particular note that\begin{eqnarray}
\frac{\left(\mathrm{E}X\right)^{2}}{\mathrm{E}X^{2}} & \leq & 1\enskip.\label{eq:SM>=PM2}\end{eqnarray}

\end{enumerate}
Just as the first moment is fairly easy to compute, so is the second
moment. Let $X$ be the number of assignments having some property
$\mathcal{P}$ (\emph{$\mathcal{P}$} might be {}``be a solution''
or {}``be a black and red polka-dot solution''):\begin{eqnarray*}
\mathrm{E}\left(X^{2}\right) & = & \mathrm{E}\left(\left(\sum_{\sigma\mathrm{\, assignment}}\boldsymbol{1}_{\sigma\in\mathcal{P}}\right)^{2}\right)\\
 & = & \mathrm{E}\left(\sum_{\sigma,\tau\mathrm{\, assignments}}\boldsymbol{1}_{\sigma\in\mathcal{P}}\boldsymbol{1}_{\tau\in\mathcal{P}}\right)\\
 & = & \sum_{\sigma,\tau\mathrm{\, assignments}}\mathrm{E}\boldsymbol{1}_{\sigma\in\mathcal{P}\land\tau\in\mathcal{P}}\\
 & = & \sum_{\sigma,\tau\mathrm{\, assignments}}\mathrm{Pr}\left(\sigma\in\mathcal{P}\land\tau\in\mathcal{P}\right)\enskip.\end{eqnarray*}

In the case of satisfiability however, we are going to see that the
Second Moment Method is much more difficult to implement than the
First Moment Method. The reason is that in general $\mathrm{E}X^{2}$
tends to be exponentially greater than $\left(\mathrm{E}X\right)^{2}$,
so equation \ref{eq:SMCS} just says that $\mathrm{Pr}\left(X>0\right)\geq0$,
which is not very informative… Thus the challenge is to find out a
set of solutions having low variance.

\subsection{Use of the Second Moment Method for Lower-Bounding the Threshold
of \texttt{\ksat\label{sec:Survey-SMM}}}

In this section we make a survey of different applications of the
Second Moment Method to lower-bound the threshold of random \ksat.
The model considered here is uniform random drawing of \ksat.

\subsubsection{Second Moment of Solutions\label{sub:Second-Moment-of-Solutions}}

Here is a perfect example of the failure of the Second Moment Method.
Namely the lower bound obtained by the Second Moment Method of solutions
is $c=0$. If $X$ is just the number of solutions, then $\mathrm{E}X=2^{n}\left(1-\frac{1}{2^{k}}\right)^{cn}$.
So let us now compute the second moment.

To do so we need an extra parameter $\mu$, representing the proportion
of variables changing values between two solutions.
\begin{enumerate}
\item total number of couples of assignments:

\begin{enumerate}
\item choose the value of variables assigned $0$ or $1$ in the first assignment:
$2^{n}$;
\item choose the subset of variables assigned different values in both assignments:
${n \choose \mu n}$;
\end{enumerate}
\item probability for a couple of assignments to be a couple of solutions:
as noted by Achlioptas \& Peres \cite{Achlioptas2004}, it is easier
to compute the probability that a clause breaks two given assignments,
since it is $\left(\frac{1-\mu}{2}\right)^{k}$. Using then the fact
that $\mathrm{Pr}\left(A\cap B\right)=1-\mathrm{Pr}\left(\overline{A}\cup\overline{B}\right)=1-\mathrm{Pr}\left(\overline{A}\right)-\mathrm{Pr}\left(\overline{B}\right)+\mathrm{Pr}\left(\overline{A}\cap\overline{B}\right)$,
it follows that the probability for a clause not to break any of both
assignments is $g\left(\mu\right)=1-\frac{2}{2^{k}}+\left(\frac{1-\mu}{2}\right)^{k}$.
\end{enumerate}
Thus the second moment is:\begin{eqnarray*}
\mathrm{E}X^{2} & = & 2^{n}\sum_{\substack{0\leq\mu\leq1\\
\mu n\in\boldsymbol{N}}
}{n \choose \mu n}g\left(\mu\right)^{cn}\enskip.\end{eqnarray*}

Let us look at the exponential equivalent of this quantity: \[
\mathrm{E}X^{2}\asymp\max_{0\leq\mu\leq1}\left(\frac{2}{\mu^{\mu}\left(1-\mu\right)^{\mu}}g\left(\mu\right)^{c}\right)^{n}\enskip.\]

As explained by Achlioptas \& Peres \cite{Achlioptas2004}, it turns
out that when $\mu=\frac{1}{2}$, this is precisely $\left(\mathrm{E}X\right)^{2}$.
Now the function $\mu\mapsto\frac{1}{\mu^{\mu}\left(1-\mu\right)^{1-\mu}}$
has its maximum at $\mu=\frac{1}{2}$, whereas $g\left(\mu\right)$
is strictly decreasing over $\left(0,1\right)$. Consequently, at
any positive ratio $c$, the maximum of $\frac{2}{\mu^{\mu}\left(1-\mu\right)^{\mu}}g\left(\mu\right)^{c}$
occurs at $\mu<\frac{1}{2}$ and $\mathrm{E}X^{2}$ is exponentially
greater than $\left(\mathrm{E}X\right)^{2}$. So $\frac{\left(\mathrm{E}X\right)^{2}}{\mathrm{E}X^{2}}$
tends to zero, and we only get that $\mathrm{Pr}\left(X>0\right)\geq0$…

\subsubsection{Balancing True and False Surfaces\label{sub:Balancing-True-and-False}}

Achlioptas \& Moore - 2002 \cite{Achlioptas2002} noticed that $g\left(\mu\right)$
is locally maximal at $\mu=\frac{1}{2}$ in \knae{} because in this
problem $g\left(\mu\right)=1-\frac{4}{2^{k}}+\frac{1}{2^{k}}\left(\left(1-\mu\right)^{k}+\mu^{k}\right)$
is symmetric in $\mu$. We recall that in \nae, when an assignment
is a solution, then the opposite assignment is a solution as well.
Thus this problem contains some symmetry. Using this remark, Achlioptas
\& Moore were able to establish a tight lower bound on the \knae{}
threshold. And since a \nae{} solution is a solution of standard
\sat, they got the following lower bound of the \ksat{} threshold:
$2^{k-1}\ln2-O\left(1\right)$.

Achlioptas \& Peres - 2004 \cite{Achlioptas2004} put some weights
onto the solutions of standard \sat{} and got a lower bound of $2^{k}\ln2-O\left(k\right)$.
(thus almost matching the asymptotic upper bound of $2^{k}\ln2$).
The weights they put are of the form $\lambda^{\mathrm{true\, surface}}\boldsymbol{1}_{\mathrm{SAT}}$,
where the true surface\index{surface} is the number of occurrences
of true literals under the solution. Assignments which are not solutions
must be discarded because the Second Moment Method requires to count
$0$ when there is no solution, as explained in section \ref{sec:How-the-SMM}.
In the particular case of \threesat{} they got a lower bound of $2.54$
(and even $2.68$ with a refinement).

We are going to implement the Second Moment Method without any weights;
so how shall we control the balance of true and false surfaces? Our
control parameters will be $\beta_{t}$, the fraction of clauses having
$t$ true literals. Then the true surface will be $\beta_{1}+2\beta_{2}+3\beta_{3}$
and the false surface will be $2\beta_{1}+\beta_{2}$. With this parameters
we are able to make the Second Moment work. However, we are not able
to achieve a better lower bound than $2.833$, see section \ref{sub:SM-Boolean-Solutions}.

 Moreover our approach is quite general and enables us to make the
Second Moment Method work on implicants as well.

\section{A General Framework for the Second Moment Method on \ksat\label{cha:General-Framework-SMM}}

Here we present a general framework for the Second Moment Method on
\ksat. Section \ref{sub:General-Framework-Preliminaries} introduces
all ingredients we need: values, signs, truth values, types of clauses
and surfaces. Then in section \ref{sec:General-SM:FM} we give the
expression of the first moment of the solutions under these settings;
the expression of the second moment is given in section \ref{sec:General-SM:SM}.
Bringing together the second moment and the constraints, we use the
Lagrange multipliers method in section \ref{sec:SM-Lagrangian}.

One point in the space of the variables is very important in the Second
Moment Method: this is what we call the independence point. It is
important because it makes $\frac{\mathrm{E}X^{2}}{\left(\mathrm{E}X\right)^{2}}=1$
(see conditions in theorem \ref{thm:1-at-independence} of section
\ref{sec:Independence-Point}). Thus if we want the Second Moment
to work, we must be careful that this point should be stationary.

We apply this general framework to boolean solutions (section \ref{sub:SM-Boolean-Solutions})
and to implicants (section \ref{sub:SM-Implicants}).

We discuss the relevance of the Second Moment Method for lower-bounding
the \ksat{} threshold in section \ref{sec:Confrontation-SM-vs-Reality},
where we use \satlab{} to confront the theoretical requirements we
obtained with reality.

\subsection{Preliminaries \label{sub:General-Framework-Preliminaries}}

\subsubsection{Values \index{value} \label{sub:SM-Values}}

First of all we have $n$ \emph{variables}. An \emph{assignment} gives
each variable a \emph{value} taken from a given \emph{domain} $D$:
\begin{itemize}
\item in the case of boolean satisfiability, $D=\left\{ 0,1\right\} $;
\item in the case of implicants, $D=\left\{ 0,1,*\right\} $.
\end{itemize}
Given an assignment, for all $a\in D$, we denote by $\delta_{a}$
the proportion of variables assigned value $a$: \begin{eqnarray}
\sum_{a\in D}\delta_{a} & = & 1\enskip.\label{eq:sum_deltas_1}\end{eqnarray}

Given two assignments $S_{1}$ and $S_{2}$, for all $\left(a,b\right)\in D^{2}$,
we denote by $\mu_{a,b}$ the proportion of variables assigned value
$a$ in $S_{1}$ and value $b$ in $S_{2}$: \begin{eqnarray}
\sum_{a\in D}\mu_{a,b} & = & \delta_{b}\enskip;\label{eq:delta_b}\\
\sum_{b\in D}\mu_{a,b} & = & \delta_{a}\enskip.\label{eq:delta_a}\end{eqnarray}

So we are going to consider some of the $\mu_{a,b}$'s as functions
of the other ones, assuming that equations \ref{eq:delta_b} and \ref{eq:delta_a}
are satisfied. We shall refer to the remaining $\mu_{a,b}$'s as a
generic variable $\mu$, cf. section \ref{sub:Derivative-mu}.

\subsubsection{Signs \index{sign}}

In a \kcnf{} formula with $cn$ \emph{clauses}, we have $kcn$ \emph{occurrences}
of variables, each having a \emph{sign} $s\in S$. In the case of
boolean satisfiability as well as in the case of implicants, $S=\left\{ +,-\right\} $;

For all $s\in S$, we denote by $\rho_{s}$ the proportion of occurrences
having sign $s$: \begin{eqnarray}
\sum_{s\in S}\rho_{s} & = & 1\enskip.\label{eq:sum_rhos_1}\end{eqnarray}

\subsubsection{Truth Values \index{truth value}}

The combination of a sign and a value yields a \emph{truth value}
$v\in\mathcal{V}$. Here is an example of a classical truth table
with $\mathcal{V}=\left\{ T,F,*\right\} $: \begin{tabular}{|c|c|c|}
\hline 
 & $+$ & $-$\tabularnewline
\hline 
$0$ & $F$ & $T$\tabularnewline
\hline 
$1$ & $T$ & $F$\tabularnewline
\hline 
$*$ & $*$ & $*$\tabularnewline
\hline
\end{tabular}.

We use the following notation, for $a\in D$, $s\in S$ and $v\in\mathcal{V}$:\begin{eqnarray*}
\chi_{a,s,v} & = & \begin{cases}
1 & \mbox{if value \ensuremath{a}\,\ and sign \ensuremath{s}\,\ yield the truth value\,\ensuremath{v}}\enskip;\\
0 & \mbox{otherwise}\enskip.\end{cases}\end{eqnarray*}

Of course one sign and one value yield exactly one truth value: \begin{eqnarray*}
\sum_{v\in\mathcal{V}}\chi_{a,s,v} & = & 1\enskip.\end{eqnarray*}

So we shall denote by $a\otimes s$ the unique $v$ such that $\chi_{a,s,v}=1$.

Given an assignment, for all $v\in\mathcal{V}$, we denote by $\eta_{v}$
the proportion of literals occurrences having the truth value $v$:
\begin{eqnarray*}
\eta_{v} & = & \sum_{\substack{a\in D\\
s\in S}
}\chi_{a,s,v}\delta_{a}\rho_{s}\enskip.\end{eqnarray*}

Given two assignments $S_{1}$ and $S_{2}$, for all $\left(v,w\right)\in\mathcal{V}^{2}$,
we denote by $\varepsilon_{v,w}$ the proportion of literals occurrences
having the truth value $v$ in $S_{1}$ and the truth value $w$ in
$S_{2}$: \begin{eqnarray*}
\varepsilon_{v,w} & = & \sum_{\substack{\left(a,b\right)\in D^{2}\\
s\in S}
}\chi_{a,s,v}\chi_{b,s,w}\mu_{a,b}\rho_{s}\enskip.\end{eqnarray*}

\subsubsection{Clauses Types \index{clause type}}

A\emph{ clause type }is an element of $\mathcal{V}^{k}$, e.g. TTF
or {*}TF. Some types of clauses will be forbidden, such as FFF and
FF{*}. We denote by $\mathcal{T}$ the set of allowed types of clauses.
$\mathcal{T}$ is a subset of $\mathcal{V}^{k}$.
\begin{itemize}
\item in the case of boolean solutions of \threesat: $\mathcal{V}=\left\{ T,F\right\} $
and FFF is forbidden;
\item in the case of implicants of \threesat: $\mathcal{V}=\left\{ T,F,*\right\} $
and the allowed types of clauses are those containing at least one
$T$.
\end{itemize}
Given an assignment, for all $t\in\mathcal{T}$, we denote by $\beta_{t}$
the proportion of clauses of type $t$ (which is zero for all forbidden
types of clauses): \begin{eqnarray}
\sum_{t\in\mathcal{T}}\beta_{t} & = & 1\enskip.\label{eq:sumbetas1}\end{eqnarray}

Given two assignments $S_{1}$ and $S_{2}$, for all $\left(t,u\right)\in\mathcal{T}^{2}$,
we denote by $\gamma_{t,u}$ the proportion of clauses of type $t$
in solution $S_{1}$ and of type $u$ in solution $S_{2}$:\begin{eqnarray}
\sum_{t\in\mathcal{T}}\gamma_{t,u} & = & \beta_{u}\enskip;\label{eq:beta_u}\\
\sum_{u\in\mathcal{T}}\gamma_{t,u} & = & \beta_{t}\enskip.\label{eq:beta_t}\end{eqnarray}

\subsubsection{Surfaces \index{surface}}

Given an assignment, the \emph{surface} occupied by a truth value
$v$ is obtained by summing all occurrences of $v$ in the different
types of clauses: $\Sigma_{v}=\sum_{t\in\mathcal{T}}\beta_{t}\sum_{i=1}^{k}\boldsymbol{1}_{t_{i}=v}$.

Given two assignments $S_{1}$ and $S_{2}$, the \emph{surface} occupied
by a couple of truth values $\left(v,w\right)$ is obtained by summing
all occurrences of $\left(v,w\right)$ in the different types of clauses:
$\Xi_{v,w}=\sum_{\left(t,u\right)\in\mathcal{T}^{2}}\gamma_{t,u}\sum_{i=1}^{k}\boldsymbol{1}_{t_{i}=v\land u_{i}=w}$.

Surfaces are normalized to $k$ because the $\beta_{t}$'s sum up
to $1$:
\begin{fact}
$\sum_{v\in\mathcal{V}}\Sigma_{v}=k$ and $\sum_{\left(v,w\right)\in\mathcal{V}^{2}}\Xi_{v,w}=k$.\label{fac:total-surface}\end{fact}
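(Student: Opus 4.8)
The plan is to prove both identities by the same bookkeeping argument: unfold the definition of the surface, interchange the order of summation so that the sum over truth values is innermost, and exploit the fact that reading off a coordinate of a clause type produces exactly one truth value. Since neither $\Sigma_{v}$ nor $\Xi_{v,w}$ involves any probabilistic ingredient—they are purely combinatorial functionals of the $\beta_{t}$'s and $\gamma_{t,u}$'s—the whole statement reduces to the normalizations \ref{eq:sumbetas1}, \ref{eq:beta_u} and \ref{eq:beta_t}.

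For the first equality I would start from $\Sigma_{v}=\sum_{t\in\mathcal{T}}\beta_{t}\sum_{i=1}^{k}\boldsymbol{1}_{t_{i}=v}$, sum over $v\in\mathcal{V}$, and push the sum over $v$ inside past the sums over $t$ and $i$. The key observation is that for a fixed type $t$ and a fixed position $i$, the coordinate $t_{i}$ is one single element of $\mathcal{V}$, so $\sum_{v\in\mathcal{V}}\boldsymbol{1}_{t_{i}=v}=1$. This collapses the inner double sum over $i$ and $v$ to $\sum_{i=1}^{k}1=k$, leaving $k\sum_{t\in\mathcal{T}}\beta_{t}$, which is $k$ by \ref{eq:sumbetas1}. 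The argument for $\Xi_{v,w}$ is identical in structure: after interchanging the sums, for fixed $(t,u)$ and fixed $i$ the pair $(t_{i},u_{i})$ is a single element of $\mathcal{V}^{2}$, so $\sum_{(v,w)\in\mathcal{V}^{2}}\boldsymbol{1}_{t_{i}=v\land u_{i}=w}=1$; the inner sums again yield $k$, leaving $k\sum_{(t,u)\in\mathcal{T}^{2}}\gamma_{t,u}$.

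There is no real obstacle here—this is a telescoping of indicators—but the one point that deserves a line of justification is the final factor $\sum_{(t,u)\in\mathcal{T}^{2}}\gamma_{t,u}=1$, which is not among the displayed equations directly. I would obtain it by summing the marginal relation \ref{eq:beta_t} (equivalently \ref{eq:beta_u}) over the remaining index, namely $\sum_{(t,u)\in\mathcal{T}^{2}}\gamma_{t,u}=\sum_{t\in\mathcal{T}}\bigl(\sum_{u\in\mathcal{T}}\gamma_{t,u}\bigr)=\sum_{t\in\mathcal{T}}\beta_{t}=1$, invoking \ref{eq:sumbetas1} at the last step. With this the second equality closes exactly as the first.
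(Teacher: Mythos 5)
Your proof is correct and is essentially identical to the paper's: both unfold the definition, interchange summations, collapse $\sum_{v\in\mathcal{V}}\boldsymbol{1}_{t_{i}=v}=1$ (resp. $\sum_{\left(v,w\right)\in\mathcal{V}^{2}}\boldsymbol{1}_{t_{i}=v\land u_{i}=w}=1$), and conclude via constraint \ref{eq:sumbetas1}. Your explicit derivation of $\sum_{\left(t,u\right)\in\mathcal{T}^{2}}\gamma_{t,u}=1$ from \ref{eq:beta_t} and \ref{eq:sumbetas1} is exactly the step the paper compresses into the remark that the same proof works using constraints \ref{eq:beta_u} and \ref{eq:sumbetas1}.
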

\begin{proof}
\begin{eqnarray*}
\sum_{v\in\mathcal{V}}\Sigma_{v} & = & \sum_{v\in\mathcal{V}}\sum_{t\in\mathcal{T}}\beta_{t}\sum_{i=1}^{k}\boldsymbol{1}_{t_{i}=v}\\
 & = & \sum_{t\in\mathcal{T}}\beta_{t}\sum_{i=1}^{k}\sum_{v\in\mathcal{V}}\boldsymbol{1}_{t_{i}=v}\\
 & = & \sum_{t\in\mathcal{T}}\beta_{t}\sum_{i=1}^{k}1\\
 & = & k\sum_{t\in\mathcal{T}}\beta_{t}\\
 & = & k\mbox{ by constraint \ref{eq:sumbetas1}.}\end{eqnarray*}

The same proof works for the other sum, using constraints \ref{eq:beta_u}
and \ref{eq:sumbetas1}.
\end{proof}

\subsubsection{Symmetry of Occurrences \index{symmetry of occurrences}\label{sub:Symmetry-of-Occurrences}}

We say that there is \emph{symmetry of occurrences} when for all permutation
$\sigma$ of $\mathcal{V}^{k}$, $\beta_{t}=\beta_{\sigma\left(t\right)}$
(it follows that $\mathcal{T}$ is closed by permutation).
\begin{fact}
Symmetry of occurrences implies that $\Sigma_{v}=k\sum_{t\in\mathcal{T}}\boldsymbol{1}_{t_{1}=v}\beta_{t}$.\label{fac:Symmetry-of-occurrences}\end{fact}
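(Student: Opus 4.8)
The plan is to start from the definition $\Sigma_{v}=\sum_{t\in\mathcal{T}}\beta_{t}\sum_{i=1}^{k}\boldsymbol{1}_{t_{i}=v}$ and simply exchange the two finite summations, writing $\Sigma_{v}=\sum_{i=1}^{k}\left(\sum_{t\in\mathcal{T}}\beta_{t}\boldsymbol{1}_{t_{i}=v}\right)$. The whole content of the statement is then concentrated in showing that the inner sum does not depend on the position $i$: once this is established, the $k$ inner sums all coincide with the one for $i=1$, and the factor $k$ appears automatically.

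To prove that the inner sum is independent of $i$, I would fix $i$ and introduce the transposition $\tau$ exchanging positions $1$ and $i$, acting on clause types by permuting their coordinates. By the symmetry of occurrences, $\mathcal{T}$ is closed under permutation, so $\tau$ maps $\mathcal{T}$ bijectively onto itself, and moreover $\beta_{t}=\beta_{\tau\left(t\right)}$ for every $t\in\mathcal{T}$. Since $t\mapsto\tau\left(t\right)$ is a bijection of the index set, I can reindex the inner sum by this map to get $\sum_{t\in\mathcal{T}}\beta_{t}\boldsymbol{1}_{t_{i}=v}=\sum_{t\in\mathcal{T}}\beta_{\tau\left(t\right)}\boldsymbol{1}_{\left(\tau\left(t\right)\right)_{i}=v}$.

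It then remains to simplify the two factors. The symmetry of occurrences gives $\beta_{\tau\left(t\right)}=\beta_{t}$, and since $\tau$ exchanges positions $1$ and $i$, the $i$-th coordinate of $\tau\left(t\right)$ is the first coordinate of $t$, i.e.\ $\left(\tau\left(t\right)\right)_{i}=t_{1}$. Hence $\sum_{t\in\mathcal{T}}\beta_{t}\boldsymbol{1}_{t_{i}=v}=\sum_{t\in\mathcal{T}}\beta_{t}\boldsymbol{1}_{t_{1}=v}$, which no longer involves $i$. Summing over $i=1,\ldots,k$ then yields $\Sigma_{v}=k\sum_{t\in\mathcal{T}}\boldsymbol{1}_{t_{1}=v}\beta_{t}$, as claimed.

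The main obstacle, such as it is, lies only in the bookkeeping of the coordinate action: one must make sure that the operation meant by $\sigma\left(t\right)$ in the definition of symmetry is indeed the permutation of coordinates, and that $\tau$ being an involution lets the change of index be its own inverse, so that no correction factor creeps in. Apart from this, every step is a mechanical manipulation of finite sums.
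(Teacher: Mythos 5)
Your proof is correct and takes essentially the same route as the paper's: both exchange the two finite sums and, for each position $i$, reindex the inner sum via the transposition swapping coordinates $1$ and $i$, using $\beta_{t}=\beta_{\tau\left(t\right)}$ together with the closure of $\mathcal{T}$ under permutations. The only difference is presentational — you spell out the bookkeeping (that $\tau$ is a bijection of $\mathcal{T}$ and that $\left(\tau\left(t\right)\right)_{i}=t_{1}$) which the paper leaves implicit in its chain of equalities.
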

\begin{proof}
Let us call $\sigma_{i}$ the permutation of $\mathcal{V}^{k}$ swapping
$t_{1}$ and $t_{i}$.\begin{eqnarray*}
\Sigma_{v} & = & \sum_{t\in\mathcal{T}}\beta_{t}\sum_{i=1}^{k}\boldsymbol{1}_{t_{i}=v}\\
 & = & \sum_{i=1}^{k}\sum_{t\in\mathcal{T}}\boldsymbol{1}_{t_{i}=v}\beta_{t}\\
 & = & \sum_{i=1}^{k}\sum_{t\in\mathcal{T}}\boldsymbol{1}_{\sigma_{i}\left(t\right)_{1}=v}\beta_{\sigma_{i}\left(t\right)}\mbox{ by symmetry of occurrences}\\
 & = & \sum_{i=1}^{k}\sum_{\sigma_{i}\left(t\right)\in\mathcal{T}}\boldsymbol{1}_{\sigma_{i}\left(t\right)_{1}=v}\beta_{\sigma_{i}\left(t\right)}\mbox{ because \ensuremath{\mathcal{T}}\,\ is closed by permutation}\\
 & = & k\sum_{t\in\mathcal{T}}\boldsymbol{1}_{t_{1}=v}\beta_{t}\enskip.\end{eqnarray*}
\end{proof}
\begin{fact}
If $\gamma_{t,u}=\beta_{t}\beta_{u}$, symmetry of occurrences implies
that $k\Xi_{v,w}=\Sigma_{v}\Sigma_{w}$.\label{fac:Indep-of-surfaces}\end{fact}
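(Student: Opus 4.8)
The plan is to mimic the proof of Fact \ref{fac:Symmetry-of-occurrences}, applying the symmetry-swapping trick to both coordinates simultaneously. I would start from the definition
\begin{eqnarray*}
k\Xi_{v,w} & = & k\sum_{\left(t,u\right)\in\mathcal{T}^{2}}\gamma_{t,u}\sum_{i=1}^{k}\boldsymbol{1}_{t_{i}=v\land u_{i}=w}\enskip,
\end{eqnarray*}
substitute the independence hypothesis $\gamma_{t,u}=\beta_{t}\beta_{u}$, and swap the order of summation to bring the sum over $i$ to the outside. The key observation is that, because the indicator $\boldsymbol{1}_{t_{i}=v\land u_{i}=w}$ factors as $\boldsymbol{1}_{t_{i}=v}\boldsymbol{1}_{u_{i}=w}$ and the weight $\beta_{t}\beta_{u}$ factors across the two assignments, the whole summand separates into a product of a term depending only on $\left(t,i\right)$ and a term depending only on $\left(u,i\right)$.

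The central step is then to handle each coordinate with the same permutation $\sigma_{i}$ swapping positions $1$ and $i$ that was used in Fact \ref{fac:Symmetry-of-occurrences}. For a fixed $i$, I would rewrite $\sum_{t\in\mathcal{T}}\boldsymbol{1}_{t_{i}=v}\beta_{t}$ as $\sum_{t\in\mathcal{T}}\boldsymbol{1}_{\sigma_{i}\left(t\right)_{1}=v}\beta_{\sigma_{i}\left(t\right)}$ using symmetry of occurrences and the fact that $\mathcal{T}$ is closed under permutation, and likewise for the $u$-sum with value $w$. This converts each inner sum into one counting occurrences at position $1$, making it independent of $i$; summing over the $k$ values of $i$ contributes a factor $k$. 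Carrying this out gives
\begin{eqnarray*}
k\Xi_{v,w} & = & k\left(\sum_{t\in\mathcal{T}}\boldsymbol{1}_{t_{1}=v}\beta_{t}\right)\left(\sum_{u\in\mathcal{T}}\boldsymbol{1}_{u_{1}=w}\beta_{u}\right)\enskip.
\end{eqnarray*}
Finally I would recognize each parenthesized factor, by Fact \ref{fac:Symmetry-of-occurrences}, as $\Sigma_{v}/k$ and $\Sigma_{w}/k$ respectively, so that $k\Xi_{v,w}=k\cdot\frac{\Sigma_{v}}{k}\cdot\frac{\Sigma_{w}}{k}=\frac{\Sigma_{v}\Sigma_{w}}{k}$, which after multiplying through is exactly the claimed identity $k\Xi_{v,w}=\Sigma_{v}\Sigma_{w}$.

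I do not expect a genuine obstacle here, since the statement is essentially a two-variable version of the already-proved Fact \ref{fac:Symmetry-of-occurrences}. The only point requiring mild care is the bookkeeping when swapping summation orders and relabelling the permuted index: I must make sure the permutation $\sigma_{i}$ is applied independently to $t$ and to $u$ (the same geometric swap of positions $1$ and $i$, but acting on two separate clause types), and that the factorization of $\gamma_{t,u}$ is what licenses treating the two coordinates separately. Without the independence hypothesis $\gamma_{t,u}=\beta_{t}\beta_{u}$ the two inner sums would remain coupled through $\gamma$ and would not separate, so this hypothesis is exactly what makes the surfaces of the two assignments multiplicative.
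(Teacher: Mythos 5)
Your route is exactly the paper's: substitute $\gamma_{t,u}=\beta_{t}\beta_{u}$, factor the indicator as $\boldsymbol{1}_{t_{i}=v}\boldsymbol{1}_{u_{i}=w}$, apply the swap $\sigma_{i}$ to each coordinate separately using symmetry of occurrences and closure of $\mathcal{T}$ under permutation, and finish with Fact \ref{fac:Symmetry-of-occurrences}. You also correctly identify that the hypothesis $\gamma_{t,u}=\beta_{t}\beta_{u}$ is what decouples the two sums. However, the bookkeeping fails at precisely the step you flagged as needing care. For each fixed $i$, the permutation trick turns the inner sums into the $i$-independent product $\left(\sum_{t\in\mathcal{T}}\boldsymbol{1}_{t_{1}=v}\beta_{t}\right)\left(\sum_{u\in\mathcal{T}}\boldsymbol{1}_{u_{1}=w}\beta_{u}\right)$, so summing over the $k$ values of $i$ gives
\begin{eqnarray*}
\Xi_{v,w} & = & k\left(\sum_{t\in\mathcal{T}}\boldsymbol{1}_{t_{1}=v}\beta_{t}\right)\left(\sum_{u\in\mathcal{T}}\boldsymbol{1}_{u_{1}=w}\beta_{u}\right)\enskip,
\end{eqnarray*}
whence $k\Xi_{v,w}$ carries the prefactor $k^{2}$, not $k$ as in your display: the left-hand side of your displayed equation should be $\Xi_{v,w}$, or the right-hand side should read $k^{2}\left(\cdots\right)\left(\cdots\right)$.

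As written the error propagates to the end: your concluding computation $k\cdot\frac{\Sigma_{v}}{k}\cdot\frac{\Sigma_{w}}{k}=\frac{\Sigma_{v}\Sigma_{w}}{k}$ does not equal $\Sigma_{v}\Sigma_{w}$, and the phrase ``after multiplying through'' is not a legitimate move --- you cannot multiply one side of a purported identity by $k$ to force agreement; it silently hides the factor of $k$ that went missing one line earlier. With the corrected prefactor everything closes instantly: $k\Xi_{v,w}=k^{2}\cdot\frac{\Sigma_{v}}{k}\cdot\frac{\Sigma_{w}}{k}=\Sigma_{v}\Sigma_{w}$. This is exactly how the paper's own proof proceeds; note that its penultimate line keeps the prefactor $k^{2}$ in front of the double sum before invoking Fact \ref{fac:Symmetry-of-occurrences}, which is the bookkeeping discipline your draft needs.
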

\begin{proof}
Let us call $\sigma_{i}$ the permutation of $\mathcal{V}^{k}$ swapping
$t_{1}$ and $t_{i}$.\begin{eqnarray*}
k\Xi_{v,w} & = & k\sum_{\left(t,u\right)\in\mathcal{T}^{2}}\gamma_{t,u}\sum_{i=1}^{k}\boldsymbol{1}_{t_{i}=v\land u_{i}=w}\\
 & = & k\sum_{\left(t,u\right)\in\mathcal{T}^{2}}\beta_{t}\beta_{u}\sum_{i=1}^{k}\boldsymbol{1}_{t_{i}=v}\boldsymbol{1}_{u_{i}=w}\mbox{ by independence}\\
 & = & k\sum_{i=1}^{k}\sum_{\left(\sigma_{i}\left(t\right),\sigma_{i}\left(u\right)\right)\in\mathcal{T}^{2}}\beta_{\sigma_{i}\left(t\right)}\beta_{\sigma_{i}\left(u\right)}\boldsymbol{1}_{\sigma_{i}\left(t\right)_{1}=v}\boldsymbol{1}_{\sigma_{i}\left(u\right)_{1}=w}\mbox{ by symmetry of occurrences}\\
 & = & k^{2}\sum_{\left(t,u\right)\in\mathcal{T}^{2}}\beta_{t}\beta_{u}\boldsymbol{1}_{t_{1}=v}\boldsymbol{1}_{u_{1}=w}\mbox{ because \ensuremath{\mathcal{T}}\,\ is closed by permutation}\\
 & = & \Sigma_{v}\Sigma_{w}\mbox{ by fact \ref{fac:Symmetry-of-occurrences}}.\end{eqnarray*}

\end{proof}
Symmetry of occurrences is quite natural and will be assumed from
now on. Note that $\delta_{a}$'s and $\beta_{t}$'s are parameters
of the first moment, so they may be chosen without any restriction,
except that they must sum up to $1$. They are our control parameters:
we can tune them as we wish in order to take into account only some
solutions. However, when the set of solutions defined by $\delta_{a}$'s
and $\beta_{t}$'s is determined, we have to consider all possible
couples of solutions. So the variables $\mu_{a,b}$'s and $\gamma_{t,u}$'s
of the second moment may not be chosen, but result from a maximization
process, as investigated in section \ref{sec:SM-Lagrangian}.

\subsection{Expression of the First Moment \label{sec:General-SM:FM}}

The first moment of the number $X$ of solutions can be split up into
the following factors: total number of assignments and probability
for an assignment to be a solution.
\begin{enumerate}
\item total number of assignments: choose subsets of variables assigned
$a\in D$: ${n \choose \dots\left(\delta_{a}n\right)_{a\in D}\dots}$.
\item probability for an assignment to be a solution:

\begin{enumerate}
\item we give each clause an allowed type $t\in\mathcal{T}$: ${cn \choose \dots\left(\beta_{t}cn\right)_{t\in\mathcal{T}}\dots}$.
\item probability for clauses to be constructed (variables + signs) according
to their types: $\prod_{t\in\mathcal{T}}\left(\prod_{i=1}^{k}\eta_{t_{i}}\right)^{\beta_{t}cn}$.
\end{enumerate}
\end{enumerate}
We denote by $\mathcal{P}$ the set of all families of non-negative
numbers $\left(\left(\delta_{a}\right)_{a\in D},\left(\beta_{t}\right)_{t\in\mathcal{T}}\right)$
satisfying constraints \ref{eq:sum_deltas_1} and \ref{eq:sumbetas1}.
We denote by $\mathcal{I}\left(n\right)$ the intersection of $\mathcal{P}$
with the multiples of $\frac{1}{n}$; we get the following expression
of the first moment:

\begin{eqnarray*}
\mathrm{E}X & = & \sum_{\left(\left(\delta_{a}\right)_{a\in D},\left(\beta_{t}\right)_{t\in\mathcal{T}}\right)\in\mathcal{I}\left(n\right)}T_{1}\left(n\right)\end{eqnarray*}

where

\begin{eqnarray*}
T_{1}\left(n\right) & = & {n \choose \dots\left(\delta_{a}n\right)_{a\in D}\dots}{cn \choose \dots\left(\beta_{t}cn\right)_{t\in\mathcal{T}}\dots}\left(\prod_{t\in\mathcal{T}}\left(\prod_{i=1}^{k}\eta_{t_{i}}\right)^{\beta_{t}}\right)^{cn}\\
 & = & {n \choose \dots\left(\delta_{a}n\right)_{a\in D}\dots}{cn \choose \dots\left(\beta_{t}cn\right)_{t\in\mathcal{T}}\dots}\left(\prod_{t\in\mathcal{T}}\prod_{i=1}^{k}\prod_{v\in\mathcal{V}}\eta_{v}^{\boldsymbol{1}_{t_{i}=v}\beta_{t}}\right)^{cn}\\
 & = & {n \choose \dots\left(\delta_{a}n\right)_{a\in D}\dots}{cn \choose \dots\left(\beta_{t}cn\right)_{t\in\mathcal{T}}\dots}\left(\prod_{v\in\mathcal{V}}\eta_{v}^{\sum_{t\in\mathcal{T}}\sum_{i=1}^{k}\boldsymbol{1}_{t_{i}=v}\beta_{t}}\right)^{cn}\\
 & = & {n \choose \dots\left(\delta_{a}n\right)_{a\in D}\dots}{cn \choose \dots\left(\beta_{t}cn\right)_{t\in\mathcal{T}}\dots}\left(\prod_{v\in\mathcal{V}}\eta_{v}^{\Sigma_{v}}\right)^{cn}\enskip.\end{eqnarray*}

The exponential equivalent of $T_{1}\left(n\right)$ is $T_{1}^{n}$,
with the following two equivalent forms:

\begin{eqnarray*}
T_{1} & = & \frac{1}{\prod_{a\in D}\delta_{a}^{\delta_{a}}}\left(\prod_{t\in\mathcal{T}}\left(\frac{\prod_{i=1}^{k}\eta_{t_{i}}}{\beta_{t}}\right)^{\beta_{t}}\right)^{c}\\
 & = & \frac{1}{\prod_{a\in D}\delta_{a}^{\delta_{a}}}\left(\frac{\prod_{v\in\mathcal{V}}\eta_{v}^{\Sigma_{v}}}{\prod_{t\in\mathcal{T}}\beta_{t}^{\beta_{t}}}\right)^{c}\enskip.\end{eqnarray*}

\subsection{Expression of the Second Moment \label{sec:General-SM:SM}}

The second moment of $X$ can be split up into the following factors:
total number of couples of assignments and probability for a couple
of assignments to be a couple of solutions.
\begin{enumerate}
\item total number of couples of assignments: ${n \choose \dots\left(\mu_{a,b}n\right)_{\left(a,b\right)\in D^{2}}\dots}$;
\item probability for a couple of assignments to be a couple of solutions:

\begin{enumerate}
\item we give each clause an allowed type $t\in\mathcal{T}$ in solution
$S_{1}$ and another $u\in\mathcal{T}$ in solution $S_{2}$: ${cn \choose \dots\left(\gamma_{t,u}cn\right)_{\left(t,u\right)\in\mathcal{T}^{2}}\dots}$;
\item probability for clauses to be constructed (variables + signs) according
to their types: $\prod_{\left(t,u\right)\in\mathcal{T}^{2}}\left(\prod_{i=1}^{k}\varepsilon_{t_{i},u_{i}}\right)^{\gamma_{t,u}cn}$.
\end{enumerate}
\end{enumerate}
We denote by $\mathcal{P}_{2}$ the set of all families of non-negative
numbers $\left(\left(\mu_{a,b}\right)_{\left(a,b\right)\in D^{2}},\left(\gamma_{t,u}\right)_{\left(t,u\right)\in\mathcal{T}^{2}}\right)$
satisfying constraints \ref{eq:delta_b}, \ref{eq:delta_a}, \ref{eq:beta_u}
and \ref{eq:beta_t}. We denote by $\mathcal{I}_{2}\left(n\right)$
the intersection of $\mathcal{P}_{2}$ with the multiples of $\frac{1}{n}$;
we get the following expression of the second moment:

\begin{eqnarray*}
\mathrm{E}X^{2} & = & \sum_{\left(\left(\mu_{a,b}\right)_{\left(a,b\right)\in D^{2}},\left(\gamma_{t,u}\right)_{\left(t,u\right)\in\mathcal{T}^{2}}\right)\in\mathcal{I}_{2}\left(n\right)}T_{2}\left(n\right)\end{eqnarray*}

where

\begin{eqnarray*}
 &  & T_{2}\left(n\right)\\
 & = & {n \choose \dots\left(\mu_{a,b}n\right)_{\left(a,b\right)\in D^{2}}\dots}{cn \choose \dots\left(\gamma_{t,u}cn\right)_{\left(t,u\right)\in\mathcal{T}^{2}}\dots}\left(\prod_{\left(t,u\right)\in\mathcal{T}^{2}}\left(\prod_{i=1}^{k}\varepsilon_{t_{i},u_{i}}\right)^{\gamma_{t,u}}\right)^{cn}\\
 & = & {n \choose \dots\left(\mu_{a,b}n\right)_{\left(a,b\right)\in D^{2}}\dots}{cn \choose \dots\left(\gamma_{t,u}cn\right)_{\left(t,u\right)\in\mathcal{T}^{2}}\dots}\left(\prod_{\left(t,u\right)\in\mathcal{T}^{2}}\prod_{i=1}^{k}\prod_{\left(v,w\right)\in\mathcal{V}^{2}}\varepsilon_{v,w}^{\boldsymbol{1}_{t_{i}=v\land u_{i}=w}\gamma_{t,u}}\right)^{cn}\\
 & = & {n \choose \dots\left(\mu_{a,b}n\right)_{\left(a,b\right)\in D^{2}}\dots}{cn \choose \dots\left(\gamma_{t,u}cn\right)_{\left(t,u\right)\in\mathcal{T}^{2}}\dots}\left(\prod_{\left(v,w\right)\in\mathcal{V}^{2}}\varepsilon_{v,w}^{\sum_{\left(t,u\right)\in\mathcal{T}^{2}}\sum_{i=1}^{k}\boldsymbol{1}_{t_{i}=v\land u_{i}=w}\gamma_{t,u}}\right)^{cn}\\
 & = & {n \choose \dots\left(\mu_{a,b}n\right)_{\left(a,b\right)\in D^{2}}\dots}{cn \choose \dots\left(\gamma_{t,u}cn\right)_{\left(t,u\right)\in\mathcal{T}^{2}}\dots}\left(\prod_{\left(v,w\right)\in\mathcal{V}^{2}}\varepsilon_{v,w}^{\Xi_{v,w}}\right)^{cn}\enskip.\end{eqnarray*}

The exponential equivalent of $T_{2}\left(n\right)$ is $T_{2}^{n}$
with the following two equivalent forms:

\begin{eqnarray*}
T_{2} & = & \frac{1}{\prod_{\left(a,b\right)\in D^{2}}\mu_{a,b}^{\mu_{a,b}}}\left(\prod_{\left(t,u\right)\in\mathcal{T}^{2}}\left(\frac{\prod_{i=1}^{k}\varepsilon_{t_{i},u_{i}}}{\gamma_{t,u}}\right)^{\gamma_{t,u}}\right)^{c}\\
 & = & \frac{1}{\prod_{\left(a,b\right)\in D^{2}}\mu_{a,b}^{\mu_{a,b}}}\left(\frac{\prod_{\left(v,w\right)\in\mathcal{V}^{2}}\varepsilon_{v,w}^{\Xi_{v,w}}}{\prod_{\left(t,u\right)\in\mathcal{T}^{2}}\gamma_{t,u}^{\gamma_{t,u}}}\right)^{c}\enskip.\end{eqnarray*}

\subsection{Expression of the Lagrangian\label{sec:SM-Lagrangian}}

When the parameters of the first moment (i.e. $\left(\delta_{a}\right)_{a\in D},\left(\beta_{t}\right)_{t\in\mathcal{T}}$)
are chosen, $T_{2}$ must be maximized under constraints \ref{eq:delta_b},
\ref{eq:delta_a}, \ref{eq:beta_u} and \ref{eq:beta_t}. That leads
us to use the Lagrange multipliers method.

As explained in section \ref{sub:SM-Values}, we are going to consider
some of the $\mu_{a,b}$'s as functions of the other ones, assuming
that equations \ref{eq:delta_b} and \ref{eq:delta_a} are satisfied.
We shall refer to the remaining $\mu_{a,b}$'s as a generic variable
$\mu$. So we define the following Lagrangian:

\begin{eqnarray*}
\Lambda & = & -\sum_{\left(a,b\right)\in D^{2}}\mu_{a,b}\ln\frac{\mu_{a,b}}{e}-c\sum_{\left(t,u\right)\in\mathcal{T}^{2}}\gamma_{t,u}\ln\frac{\gamma_{t,u}}{e}+c\sum_{\left(v,w\right)\in\mathcal{V}^{2}}\Xi_{v,w}\ln\varepsilon_{v,w}\\
 &  & +c\sum_{t\in\mathcal{T}}\left(\ln f_{t}\right)\left(\sum_{u\in\mathcal{T}}\gamma_{t,u}-\beta_{t}\right)+c\sum_{u\in\mathcal{T}}\left(\ln g_{u}\right)\left(\sum_{t\in\mathcal{T}}\gamma_{t,u}-\beta_{u}\right)\enskip.\end{eqnarray*}

\subsubsection{Derivative with respect to $\gamma_{t,u}$}

\begin{eqnarray*}
\frac{\partial\Lambda}{\partial\gamma_{t,u}} & = & -c\ln\gamma_{t,u}+c\sum_{i=1}^{k}\ln\varepsilon_{t_{i},u_{i}}+c\ln f_{t}+c\ln g_{u}\enskip.\end{eqnarray*}

Canceling out this derivative yields:

\begin{eqnarray}
\gamma_{t,u} & = & f_{t}g_{u}\prod_{i=1}^{k}\varepsilon_{t_{i},u_{i}}\enskip.\label{eq:gamma_tu}\end{eqnarray}

\subsubsection{Derivative with respect to $\mu$\label{sub:Derivative-mu}}

\begin{eqnarray}
\frac{\partial\Lambda}{\partial\mu} & = & \sum_{\left(a,b\right)\in D^{2}}\frac{\partial\Lambda}{\partial\mu_{a,b}}\frac{\partial\mu_{a,b}}{\partial\mu}\nonumber \\
 & = & \sum_{\left(a,b\right)\in D^{2}}\frac{\partial\mu_{a,b}}{\partial\mu}\left(-\ln\mu_{a,b}+c\sum_{\left(v,w\right)\in\mathcal{V}^{2}}\frac{\partial\varepsilon_{v,w}}{\partial\mu_{a,b}}\frac{\Xi_{v,w}}{\varepsilon_{v,w}}\right)\nonumber \\
 & = & -\sum_{\left(a,b\right)\in D^{2}}\frac{\partial\mu_{a,b}}{\partial\mu}\ln\mu_{a,b}+c\sum_{\left(a,b\right)\in D^{2}}\frac{\partial\mu_{a,b}}{\partial\mu}\sum_{\left(v,w\right)\in\mathcal{V}^{2}}\frac{\Xi_{v,w}}{\varepsilon_{v,w}}\sum_{s\in S}\chi_{a,s,v}\chi_{b,s,w}\rho_{s}\nonumber \\
 & = & -\sum_{\left(a,b\right)\in D^{2}}\frac{\partial\mu_{a,b}}{\partial\mu}\ln\mu_{a,b}+c\sum_{\left(a,b\right)\in D^{2}}\frac{\partial\mu_{a,b}}{\partial\mu}\sum_{s\in S}\rho_{s}\frac{\Xi_{a\otimes s,b\otimes s}}{\varepsilon_{a\otimes s,b\otimes s}}\enskip.\label{eq:der_mu}\end{eqnarray}

Canceling out this derivative is somewhat tricky in general, so we
are going to focus on some simple particular cases and otherwise end
up calculations numerically with \mathematica…

So let us consider first a particularly simplifying case, i.e. when
$\varepsilon_{v,w}=\eta_{v}\eta_{w}$.

\subsection{Independence Point - Discussion about $\varepsilon_{v,w}$ \label{sec:Independence-Point}}

We define the independence point \index{independence point} in the
polytope $\mathcal{P}_{2}$ to be the point where $\mu_{a,b}=\delta_{a}\delta_{b}$
and $\gamma_{t,u}=\beta_{t}\beta_{u}$. This point is of major interest
because it make $\frac{T_{2}}{T_{1}^{2}}=1$ if $\varepsilon_{v,w}=\eta_{v}\eta_{w}$
(see theorem \ref{thm:1-at-independence}). More surprisingly, it
turns out that there seems to be a dichotomy in the success / failure
of the Second Moment Method, regarding $\varepsilon_{v,w}$ at the
independence point:
\begin{itemize}
\item if $\varepsilon_{v,w}=\eta_{v}\eta_{w}$ then we are able to find
a necessary and sufficient condition on the first moment parameters
for the Second Moment Method to give a non trivial lower bound in
all models we considered:

\begin{itemize}
\item boolean solutions, see section \ref{sub:SM-Boolean-Solutions};
\item implicants, see section \ref{sub:SM-Implicants};
\item distributional model, see chapter \ref{cha:Distributional-SM};
\end{itemize}
\item otherwise numerical calculations give us empirical evidence that the
Second Moment Method fails to give any non trivial lower bound to
the threshold; indeed even when we have almost this identity, the
ratio $\frac{T_{2}}{T_{1}^{2}}$ is strictly greater than $1$ for
any positive ratio $c$ (see figures \ref{fig:delta1/2-eq} and \ref{fig:delta1/2-prop}
at the end of section \ref{sub:SM-Boolean-Solutions} and figure \ref{fig:delta_tp-omega=00003D1}
in chapter \ref{cha:Distributional-SM}).\end{itemize}
\begin{conjecture}
The Second Moment Method works only if $\varepsilon_{v,w}=\eta_{v}\eta_{w}$.\label{con:The-Second-Moment}
\end{conjecture}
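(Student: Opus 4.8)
The plan is to prove the contrapositive of conjecture \ref{con:The-Second-Moment}: if $\varepsilon_{v,w}\neq\eta_v\eta_w$ at the independence point, then the method can only yield the trivial bound. First I would pin down what ``works'' means at the scale at which a counterexample could occur. Since $\mathrm{E}X\asymp T_1^{\,n}$ while $\mathrm{E}X^2\asymp\left(\max_{\mathcal{P}_2}T_2\right)^n$, and since the Cauchy--Schwarz inequality \ref{eq:SM>=PM2} already forces $\max_{\mathcal{P}_2}T_2\geq T_1^2$, the ratio $(\mathrm{E}X)^2/\mathrm{E}X^2$ stays bounded away from $0$ only if $\max_{\mathcal{P}_2}T_2=T_1^2$; any strict inequality makes the ratio decay exponentially. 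Hence it suffices to prove that $\varepsilon_{v,w}\neq\eta_v\eta_w$ forces $\max_{\mathcal{P}_2}T_2>T_1^2$.

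The first concrete step is to evaluate $T_2$ at the independence point itself, where $\mu_{a,b}=\delta_a\delta_b$ and $\gamma_{t,u}=\beta_t\beta_u$. The two entropy factors collapse, because $\prod_{(a,b)}(\delta_a\delta_b)^{\delta_a\delta_b}=\prod_a\delta_a^{2\delta_a}$ and likewise $\prod_{(t,u)}(\beta_t\beta_u)^{\beta_t\beta_u}=\prod_t\beta_t^{2\beta_t}$, and fact \ref{fac:Indep-of-surfaces} lets me substitute $\Xi_{v,w}=\Sigma_v\Sigma_w/k$. A short computation then gives $\ln\left(T_2/T_1^2\right)=cn\,\Delta$ at this point, where $\Delta=\sum_{(v,w)}\frac{\Sigma_v\Sigma_w}{k}\ln\varepsilon_{v,w}-2\sum_v\Sigma_v\ln\eta_v$. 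Using $\sum_w\Sigma_w=k$ one checks that $\Delta=0$ exactly when $\varepsilon_{v,w}=\eta_v\eta_w$, recovering theorem \ref{thm:1-at-independence}; so $\varepsilon_{v,w}=\eta_v\eta_w$ is singled out as the unique way to keep the independence point \emph{at} level $T_1^2$.

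When $\Delta>0$ the conclusion is immediate: the independence point alone already witnesses $\max_{\mathcal{P}_2}T_2\geq T_2(\text{indep})>T_1^2$, so the method fails. The remaining and harder case is $\Delta\leq0$, where the independence point sits at or below $T_1^2$ and the true maximizer lies elsewhere. Here I would argue through the stationarity conditions \ref{eq:gamma_tu} and \ref{eq:der_mu}. The $\gamma$-equation \ref{eq:gamma_tu} is solvable at $\gamma_{t,u}=\beta_t\beta_u$ precisely when $\prod_i\varepsilon_{t_i,u_i}$ factorises as a function of $t$ times a function of $u$, which it does when $\varepsilon_{v,w}=\eta_v\eta_w$ (take $f_t=\beta_t/\prod_i\eta_{t_i}$ and $g_u=\beta_u/\prod_i\eta_{u_i}$). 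The $\mu$-equation \ref{eq:der_mu} is the delicate one: its vanishing requires $\sum_s\rho_s\,\Xi_{a\otimes s,b\otimes s}/\varepsilon_{a\otimes s,b\otimes s}$ to be additively separable in $(a,b)$, and the shared sign $s$ between the two solutions obstructs this separability in general. The goal is to exhibit, from a failure of $\varepsilon_{v,w}=\eta_v\eta_w$, an admissible direction along which $T_2$ strictly increases past $T_1^2$.

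The hard part, and the reason this stays a conjecture, is closing the converse gap. The sign of $\Delta$ is genuinely indefinite: $\Sigma_v$ is controlled by the clause-type profile $(\beta_t)$ while $\eta_v$ is controlled by the value and sign profiles $(\delta_a),(\rho_s)$, and these are independent control parameters, so no weighted-entropy inequality pins down whether the independence-point value lies above or below $T_1^2$. Moreover $T_2$ is a difference of entropies plus a surface-weighted logarithmic term, hence not concave, so there is no convexity to force the maximizer to be unique; and the shared-sign coupling entangles the $\mu$-block and the $\gamma$-block, defeating a separable Lagrange analysis. A rigorous proof would need either a global concavity (or log-Sobolev-type) estimate for $T_2$ after a suitable change of variables, or an explicit ascent direction valid \emph{uniformly} over every admissible first-moment profile with $\varepsilon_{v,w}\neq\eta_v\eta_w$. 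I expect the uniform ascent-direction construction to be the realistic route, and its uniformity over all profiles to be the genuine obstacle --- which is exactly where the present work falls back on numerical evidence.
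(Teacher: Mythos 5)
You were asked to prove a statement that the paper itself labels a \emph{conjecture} and never proves: the paper's support for it consists of the rigorous converse-flavored fragment (theorem \ref{thm:1-at-independence}, showing $\frac{T_{2}}{T_{1}^{2}}=1$ at the independence point \emph{when} $\varepsilon_{v,w}=\eta_{v}\eta_{w}$), the necessary stationarity condition \ref{eq:cancel_out_deriv_mu}, and otherwise purely numerical evidence (figures \ref{fig:delta1/2-eq}, \ref{fig:delta1/2-prop} and \ref{fig:delta_tp-omega=00003D1}). Your proposal is honest about reaching the same impasse --- your last paragraph concedes that the central converse step is missing --- so as a proof it fails, and the unfilled step is exactly the one the paper could not fill either: exhibiting, uniformly over all admissible first-moment profiles with $\varepsilon_{v,w}\neq\eta_{v}\eta_{w}$, an admissible direction along which $T_{2}$ strictly exceeds $T_{1}^{2}$. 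Up to that point your reduction (``works'' at exponential order iff $\max_{\mathcal{P}_{2}}T_{2}=T_{1}^{2}$, since Cauchy--Schwarz forces $\max_{\mathcal{P}_{2}}T_{2}\geq T_{1}^{2}$) and your independence-point evaluation via fact \ref{fac:Indep-of-surfaces} are sound and consistent with the paper's own rigorous fragments.

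Beyond the admitted open case, one step of your plan is actually wrong as stated: the claim that $\Delta=\sum_{\left(v,w\right)}\frac{\Sigma_{v}\Sigma_{w}}{k}\ln\frac{\varepsilon_{v,w}}{\eta_{v}\eta_{w}}$ vanishes \emph{exactly} when $\varepsilon_{v,w}=\eta_{v}\eta_{w}$. That is a single scalar equation against a whole family of factorization identities, and its summands have mixed signs: in the boolean case at the independence point, $\varepsilon_{T,F}-\eta_{T}\eta_{F}=-\rho\left(1-\rho\right)\left(2\delta-1\right)^{2}\leq0$ together with $\varepsilon_{T,F}+\varepsilon_{T,T}=\eta_{T}$ and $\eta_{T}+\eta_{F}=1$ forces $\varepsilon_{T,T}\geq\eta_{T}^{2}$ and $\varepsilon_{F,F}\geq\eta_{F}^{2}$, so $\Delta$ can vanish by cancellation without factorization. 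A Jensen argument pins $\Delta=0$ to factorization only under $\Sigma_{v}=k\eta_{v}$, which is not among your hypotheses (and which figure \ref{fig:delta1/2-prop} shows is not sufficient for success anyway). Consequently your dichotomy does not cleanly split ``immediate failure'' ($\Delta>0$) from the ``hard case'' ($\Delta\leq0$): at $\Delta=0$ without factorization you would still need non-stationarity of the independence point to push $T_{2}$ above $T_{1}^{2}$, which is again the missing ascent-direction construction. Finally, note that your reduction of ``works'' to the exponential-order identity silently ignores the polynomial residues of the multinomials; the paper's own remark flags the same gap, so this matches the paper's level of rigor but should be stated as an assumption rather than a consequence of inequality \ref{eq:SM>=PM2}.
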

This conjecture echoes the following theorem.
\begin{thm}
At the independence point (i.e. $\mu_{a,b}=\delta_{a}\delta_{b}$
and $\gamma_{t,u}=\beta_{t}\beta_{u}$), if $\varepsilon_{v,w}=\eta_{v}\eta_{w}$,
then $\frac{T_{2}}{T_{1}^{2}}=1$.\label{thm:1-at-independence}\end{thm}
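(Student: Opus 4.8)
The plan is to substitute the independence-point values $\mu_{a,b}=\delta_a\delta_b$, $\gamma_{t,u}=\beta_t\beta_u$, and the hypothesis $\varepsilon_{v,w}=\eta_v\eta_w$ directly into the two expressions, and show algebraically that $T_2$ factors as $T_1^2$. The cleanest route is to compare the two quantities factor by factor using the closed forms already derived in sections \ref{sec:General-SM:FM} and \ref{sec:General-SM:SM}. Recall
\begin{eqnarray*}
T_1 &=& \frac{1}{\prod_{a\in D}\delta_a^{\delta_a}}\left(\frac{\prod_{v\in\mathcal{V}}\eta_v^{\Sigma_v}}{\prod_{t\in\mathcal{T}}\beta_t^{\beta_t}}\right)^{c},\\
T_2 &=& \frac{1}{\prod_{(a,b)\in D^2}\mu_{a,b}^{\mu_{a,b}}}\left(\frac{\prod_{(v,w)\in\mathcal{V}^2}\varepsilon_{v,w}^{\Xi_{v,w}}}{\prod_{(t,u)\in\mathcal{T}^2}\gamma_{t,u}^{\gamma_{t,u}}}\right)^{c}.
\end{eqnarray*}
So it suffices to verify that each of the three groups of factors in $T_2$ (the $\mu$-entropy prefactor, the $\varepsilon$-power numerator, the $\gamma$-entropy denominator) splits into the square of its $T_1$-counterpart.

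First I would handle the combinatorial prefactor. Plugging $\mu_{a,b}=\delta_a\delta_b$,
\[
\prod_{(a,b)\in D^2}\mu_{a,b}^{\mu_{a,b}}=\prod_{(a,b)\in D^2}(\delta_a\delta_b)^{\delta_a\delta_b}
=\left(\prod_{a\in D}\delta_a^{\delta_a}\right)^{2},
\]
where the last equality uses $\sum_a\delta_a=1$ (constraint \ref{eq:sum_deltas_1}) to split the exponent and collapse the sum over the unused index. The identical manipulation, now with $\gamma_{t,u}=\beta_t\beta_u$ and $\sum_t\beta_t=1$ (constraint \ref{eq:sumbetas1}), gives $\prod_{(t,u)}\gamma_{t,u}^{\gamma_{t,u}}=\big(\prod_t\beta_t^{\beta_t}\big)^2$, which is exactly the square of the $T_1$ denominator.

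The last group is the numerator $\prod_{(v,w)}\varepsilon_{v,w}^{\Xi_{v,w}}$, and this is where the hypotheses interact most substantially. Under $\varepsilon_{v,w}=\eta_v\eta_w$ it becomes $\prod_{(v,w)}(\eta_v\eta_w)^{\Xi_{v,w}}=\prod_v\eta_v^{\sum_w\Xi_{v,w}}\cdot\prod_w\eta_w^{\sum_v\Xi_{v,w}}$, so I need $\sum_{w\in\mathcal{V}}\Xi_{v,w}=\Sigma_v$; symmetrically for the other marginal. Here I expect the main obstacle to lie: one must check that the $(v,w)$-surface $\Xi$ marginalizes correctly to the single surface $\Sigma$ at the independence point. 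This follows by writing $\sum_w\Xi_{v,w}=\sum_{(t,u)}\gamma_{t,u}\sum_{i=1}^k\mathbf{1}_{t_i=v}\big(\sum_w\mathbf{1}_{u_i=w}\big)$, noting the inner sum over $w$ is $1$, and then summing $\gamma_{t,u}=\beta_t\beta_u$ over $u$ via constraint \ref{eq:beta_t} to recover $\sum_t\beta_t\sum_i\mathbf{1}_{t_i=v}=\Sigma_v$. Combining the three verified factorizations yields $T_2=T_1^2$, i.e. $\frac{T_2}{T_1^2}=1$. I would remark that fact \ref{fac:Indep-of-surfaces} gives the companion relation $k\Xi_{v,w}=\Sigma_v\Sigma_w$ and serves as an independent consistency check on the marginalization, though it is not strictly needed for the entropy-factor argument above.
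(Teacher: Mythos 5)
Your proof is correct, and for the one nontrivial factor it takes a genuinely different route from the paper's. The paper works with the clause-type-indexed closed form of $T_2$, writes $\frac{T_2}{T_1^2}$ out in full, uses $\prod_{i=1}^k\varepsilon_{t_i,u_i}=\left(\prod_{i=1}^k\eta_{t_i}\right)\left(\prod_{i=1}^k\eta_{u_i}\right)$, and then applies one factorization lemma (fact \ref{fac:factorization}) three times: to the $\delta_a$'s, to the $\beta_t$'s, and --- this is its replacement for your third step --- to the quantities $b_t=\prod_{i=1}^k\eta_{t_i}$ with exponents $\beta_t$, so that the truth-value surfaces $\Sigma_v$ and $\Xi_{v,w}$ never appear. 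You instead use the surface-indexed form $\prod_{(v,w)}\varepsilon_{v,w}^{\Xi_{v,w}}$ and reduce the numerator to the marginalization identity $\sum_{w\in\mathcal{V}}\Xi_{v,w}=\Sigma_v$, which you verify correctly via $\sum_{w}\boldsymbol{1}_{u_i=w}=1$ and constraint \ref{eq:beta_t}. Your first two factorizations (for $\mu_{a,b}$ and $\gamma_{t,u}$) coincide with the paper's first two applications of its lemma, so the divergence is only in how the $\varepsilon$-power factor is handled. What your route buys: the marginal identity $\sum_w\Xi_{v,w}=\Sigma_v$ holds at \emph{every} point of $\mathcal{P}_2$, not just at independence, so your decomposition makes visible that under $\varepsilon_{v,w}=\eta_v\eta_w$ the numerator equals $\left(\prod_v\eta_v^{\Sigma_v}\right)^2$ throughout the feasible polytope, and only the two entropy prefactors actually need $\mu_{a,b}=\delta_a\delta_b$ and $\gamma_{t,u}=\beta_t\beta_u$. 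What the paper's route buys: uniformity --- the entire proof is three instances of a single two-line lemma requiring nothing beyond the normalizations $\sum_a\delta_a=1$ and $\sum_t\beta_t=1$, with no need to introduce or marginalize $\Xi$. Your closing remark is also accurate: fact \ref{fac:Indep-of-surfaces} ($k\Xi_{v,w}=\Sigma_v\Sigma_w$) requires symmetry of occurrences, which neither your argument nor the paper's uses, so it is rightly demoted to a consistency check rather than an ingredient.
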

\begin{proof}
Let us recall that

\begin{eqnarray*}
\frac{T_{2}}{T_{1}^{2}} & = & \frac{\left(\prod_{a\in D}\delta_{a}^{\delta_{a}}\right)^{2}}{\prod_{\left(a,b\right)\in D^{2}}\mu_{a,b}^{\mu_{a,b}}}\left(\left(\prod_{t\in\mathcal{T}}\left(\frac{\beta_{t}}{\prod_{i=1}^{k}\eta_{t_{i}}}\right)^{\beta_{t}}\right)^{2}\prod_{\left(t,u\right)\in\mathcal{T}^{2}}\left(\frac{\prod_{i=1}^{k}\varepsilon_{t_{i},u_{i}}}{\gamma_{t,u}}\right)^{\gamma_{t,u}}\right)^{c}\enskip.\end{eqnarray*}

So at the independence point:

\begin{eqnarray*}
\frac{T_{2}}{T_{1}^{2}} & = & \frac{\left(\prod_{a\in D}\delta_{a}^{\delta_{a}}\right)\left(\prod_{b\in D}\delta_{b}^{\delta_{b}}\right)}{\prod_{\left(a,b\right)\in D^{2}}\left(\delta_{a}\delta_{b}\right)^{\delta_{a}\delta_{b}}}\\
 &  & \cdot\left(\frac{\left(\prod_{t\in\mathcal{T}}\beta_{t}^{\beta_{t}}\right)\left(\prod_{u\in\mathcal{T}}\beta_{u}^{\beta_{u}}\right)}{\prod_{\left(t,u\right)\in\mathcal{T}^{2}}\left(\beta_{t}\beta_{u}\right)^{\beta_{t}\beta_{u}}}\cdot\frac{\prod_{\left(t,u\right)\in\mathcal{T}^{2}}\left(\left(\prod_{i=1}^{k}\eta_{t_{i}}\right)\left(\prod_{i=1}^{k}\eta_{u_{i}}\right)\right)^{\beta_{t}\beta_{u}}}{\left(\prod_{t\in\mathcal{T}}\left(\prod_{i=1}^{k}\eta_{t_{i}}\right)^{\beta_{t}}\right)\left(\prod_{u\in\mathcal{T}}\left(\prod_{i=1}^{k}\eta_{u_{i}}\right)^{\beta_{u}}\right)}\right)^{c}\enskip.\end{eqnarray*}

To show that this quantity is indeed $1$, we shall use the following
fact:\end{proof}
\begin{fact}
\label{fac:factorization}Let \emph{$\mathcal{X}$} and $\mathcal{Y}$
be two finite sets. We assume that $\sum_{x\in\mathcal{X}}e_{x}=\sum_{y\in\mathcal{Y}}f_{y}=1$.\\
 Then $\prod_{\left(x,y\right)\in\mathcal{X}\times\mathcal{Y}}\left(b_{x}c_{y}\right)^{e_{x}f_{y}}=\left(\prod_{x\in\mathcal{X}}b_{x}^{e_{x}}\right)\left(\prod_{y\in\mathcal{Y}}c_{y}^{f_{y}}\right)$.\end{fact}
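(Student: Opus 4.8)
The plan is to split the product over pairs into two separate products and then collapse each one using the normalization hypotheses $\sum_{x}e_{x}=\sum_{y}f_{y}=1$. First I would use $\left(b_{x}c_{y}\right)^{e_{x}f_{y}}=b_{x}^{e_{x}f_{y}}c_{y}^{e_{x}f_{y}}$ to factor the left-hand side as
\[
\prod_{\left(x,y\right)\in\mathcal{X}\times\mathcal{Y}}\left(b_{x}c_{y}\right)^{e_{x}f_{y}}=\left(\prod_{\left(x,y\right)}b_{x}^{e_{x}f_{y}}\right)\left(\prod_{\left(x,y\right)}c_{y}^{e_{x}f_{y}}\right).
\]

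Next I would treat each factor on its own. In the first one the base $b_{x}$ does not depend on $y$, so collecting the exponents over $y$ turns the factor attached to a fixed $x$ into $b_{x}^{e_{x}\sum_{y\in\mathcal{Y}}f_{y}}$; the hypothesis $\sum_{y}f_{y}=1$ reduces this to $b_{x}^{e_{x}}$, whence $\prod_{\left(x,y\right)}b_{x}^{e_{x}f_{y}}=\prod_{x\in\mathcal{X}}b_{x}^{e_{x}}$. Symmetrically, in the second factor $c_{y}$ does not depend on $x$, so summing the exponents over $x$ and invoking $\sum_{x}e_{x}=1$ yields $\prod_{\left(x,y\right)}c_{y}^{e_{x}f_{y}}=\prod_{y\in\mathcal{Y}}c_{y}^{f_{y}}$. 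Multiplying the two collapsed factors gives the announced identity.

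There is no genuine obstacle here: beyond the routine regrouping of factors, the only ingredient is the pair of normalization conditions, and that is precisely what turns each double product back into a single product. The one point deserving a word of care is that the rearrangement of the product indexed by the Cartesian set $\mathcal{X}\times\mathcal{Y}$ into an iterated product is legitimate because both sets are finite and the exponents $e_{x},f_{y}$ are fixed reals, so no convergence or ordering issue arises; equivalently, one may phrase the whole computation in additive form by taking logarithms, where the identity becomes the elementary statement $\sum_{x,y}e_{x}f_{y}\left(\ln b_{x}+\ln c_{y}\right)=\sum_{x}e_{x}\ln b_{x}+\sum_{y}f_{y}\ln c_{y}$.
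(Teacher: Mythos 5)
Your proof is correct and follows essentially the same route as the paper's: split $\left(b_{x}c_{y}\right)^{e_{x}f_{y}}$ into the two partial products, collapse the exponent sums $\sum_{y}f_{y}$ and $\sum_{x}e_{x}$ to $1$ via the normalization hypotheses, and multiply. Your additional remark on finiteness and the logarithmic reformulation is a harmless extra but not a different argument.
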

\begin{proof}
\begin{eqnarray*}
\prod_{\left(x,y\right)\in\mathcal{X}\times\mathcal{Y}}\left(b_{x}c_{y}\right)^{e_{x}f_{y}} & = & \prod_{\left(x,y\right)\in\mathcal{X}\times\mathcal{Y}}b_{x}^{e_{x}f_{y}}\prod_{\left(x,y\right)\in\mathcal{X}\times\mathcal{Y}}c_{y}^{e_{x}f_{y}}\\
 & = & \prod_{x\in\mathcal{X}}b_{x}^{e_{x}\sum_{y\in\mathcal{Y}}f_{y}}\prod_{y\in\mathcal{Y}}c_{y}^{f_{y}\sum_{x\in\mathcal{X}}e_{x}}\\
 & = & \prod_{x\in\mathcal{X}}b_{x}^{e_{x}}\prod_{y\in\mathcal{Y}}c_{y}^{f_{y}}\enskip.\end{eqnarray*}

Now to prove that the previous ratio $\frac{T_{2}}{T_{1}^{2}}$ is
$1$, it suffices to apply fact \ref{fac:factorization} 3 times:

with $\mathcal{X}=\mathcal{Y}=D$, $b_{x}=e_{x}=\delta_{a}$ and $c_{y}=f_{y}=\delta_{b}$,
which is possible thanks to equation \ref{eq:sum_deltas_1};

with $\mathcal{X}=\mathcal{Y}=\mathcal{T}$, $b_{x}=e_{x}=\beta_{t}$
and $c_{y}=f_{y}=\beta_{u}$, which is possible thanks to equation
\ref{eq:sumbetas1};

with $\mathcal{X}=\mathcal{Y}=\mathcal{T}$, $b_{x}=\prod_{i=1}^{k}\eta_{t_{i}}$,
$e_{x}=\beta_{t}$, $c_{y}=\prod_{i=1}^{k}\eta_{u_{i}}$ and $f_{y}=\beta_{u}$,
which is possible again thanks to equation \ref{eq:sumbetas1}. 
\end{proof}
Moreover, it turns out that the independence point satisfies constraints
\ref{eq:delta_b}, \ref{eq:delta_a}, \ref{eq:beta_u} and \ref{eq:beta_t},
thus $T_{2}$ must be stationary at the independence point if we want
the Second Moment Method to work (because $\frac{T_{2}}{T_{1}^{2}}$
must not exceed $1$ if we want to avoid the pitfall we encountered
in section \ref{sub:Second-Moment-of-Solutions}). Thus we have the
following necessary condition to make the Second Moment Method work:

\begin{eqnarray}
\frac{\partial\Lambda}{\partial\mu} & = & 0\mbox{ at the independence point.}\label{eq:cancel_out_deriv_mu}\end{eqnarray}

\begin{rem}
We show rigorously only the fact that the above conditions ($\varepsilon_{v,w}=\eta_{v}\eta_{w}$
and equation \ref{eq:cancel_out_deriv_mu}) are necessary to make
the Second Moment Method work, but not that they are sufficient. This
would require to handle the polynomial residues of the multinomials,
and the complete expressions of $\mathrm{E}X^{2}$ and $\left(\mathrm{E}X\right)^{2}$.
Since we show only negative results (i.e. bad lower bounds), this
tricky part is omitted.
\end{rem}

\subsection{Applications \label{sec:General-SM:Applications}}

\subsubsection{Boolean Solutions\label{sub:SM-Boolean-Solutions}}

\paragraph{Preliminaries.}
\begin{enumerate}
\item the domain of values is $D=\left\{ 0,1\right\} $; given a solution,
we call $\delta$ the fraction of variables assigned $1$; constraints
\ref{eq:delta_b} and \ref{eq:delta_a} become: \begin{eqnarray*}
\delta & = & \mu_{1,1}+\mu_{1,0}\\
\delta & = & \mu_{1,1}+\mu_{0,1}\\
1-\delta & = & \mu_{0,0}+\mu_{1,0}\\
1-\delta & = & \mu_{0,0}+\mu_{0,1}\end{eqnarray*}
so, if we define $\mu=\mu_{0,1}$, we have\foreignlanguage{french}{\begin{eqnarray*}
\mu_{1,1} & = & \delta-\mu\\
\mu_{0,0} & = & 1-\delta-\mu\\
\mu_{1,0} & = & \mu\\
\mu_{0,1} & = & \mu\end{eqnarray*}
}
\item the set of signs is $S=\left\{ +,-\right\} $; we call $\rho$ the
fraction of positive occurrences;
\item the set of truth values is $\mathcal{V}=\left\{ T,F\right\} $ and
the truth table is: \begin{tabular}{|c|c|c|}
\hline 
 & $+$ & $-$\tabularnewline
\hline 
$0$ & $F$ & $T$\tabularnewline
\hline 
$1$ & $T$ & $F$\tabularnewline
\hline
\end{tabular}; so\begin{eqnarray}
\eta_{T} & = & \rho\delta+\left(1-\rho\right)\left(1-\delta\right)\nonumber \\
\eta_{F} & = & \left(1-\rho\right)\delta+\rho\left(1-\delta\right)\nonumber \\
\varepsilon_{T,F}=\varepsilon_{F,T} & = & \mu\label{eq:sol-eps-TF-FT}\\
\varepsilon_{T,T} & = & \eta_{T}-\mu\label{eq:sol-eps-TT}\\
\varepsilon_{F,F} & = & \eta_{F}-\mu\label{eq:sol-eps-FF}\end{eqnarray}

\item the set of allowed types of clauses is $\mathcal{T}=D^{k}\backslash\left\{ F^{k}\right\} $.
\end{enumerate}

\paragraph{Condition for $\varepsilon_{v,w}=\eta_{v}\eta_{w}$ at the Independence
Point.}

The first thing to notice is that if $\varepsilon_{T,F}=\eta_{T}\eta_{F}$,
then the three other identities follow, because $\varepsilon_{T,F}+\varepsilon_{T,T}=\eta_{T}$
etc.

At the independence point, we have $\mu=\delta\left(1-\delta\right)$.
Thus \begin{eqnarray*}
\varepsilon_{T,F}-\eta_{T}\eta_{F} & = & \delta\left(1-\delta\right)-\left(\rho\delta+\left(1-\rho\right)\left(1-\delta\right)\right)\left(\left(1-\rho\right)\delta+\rho\left(1-\delta\right)\right)\\
 & = & \delta\left(1-\delta\right)-\rho\left(1-\rho\right)\left(\delta^{2}+\left(1-\delta\right)^{2}\right)-\left(\rho^{2}+\left(1-\rho\right)^{2}\right)\delta\left(1-\delta\right)\\
 & = & 2\rho\left(1-\rho\right)\delta\left(1-\delta\right)-\rho\left(1-\rho\right)\left(\delta^{2}+\left(1-\delta\right)^{2}\right)\\
 & = & -\rho\left(1-\rho\right)\left(2\delta-1\right)^{2}\enskip.\end{eqnarray*}

Consequently, $\varepsilon_{T,F}\leq\eta_{T}\eta_{F}$, with equality
iff $\rho\in\left\{ 0,1\right\} $ or $\delta=\frac{1}{2}$.

We discard the particular case of $\rho\in\left\{ 0,1\right\} $ (which
corresponds to monotone \ksat, always trivially satisfiable). It
turns out that as soon as $\delta\neq\frac{1}{2}$, we could not make
the Second Moment Method work: our numerical attempts revealed that
the ratio $\frac{T_{2}}{T_{1}^{2}}$ is strictly greater than $1$
for any positive ratio $c$.

On the other hand, when $\delta=\frac{1}{2}$ we could make the Second
Moment Method work, as follows.

\paragraph{Condition for the Second Moment Method to Work at $\delta=\frac{1}{2}$.}

As mentioned in section \ref{sec:Independence-Point}, stationarity
of the independence point implies that $\frac{\partial\Lambda}{\partial\mu}=0$.

Using equation \ref{eq:der_mu}:

\begin{eqnarray*}
\frac{\partial\Lambda}{\partial\mu} & = & -\sum_{\left(a,b\right)\in D^{2}}\frac{\partial\mu_{a,b}}{\partial\mu}\ln\mu_{a,b}+c\sum_{\left(a,b\right)\in D^{2}}\frac{\partial\mu_{a,b}}{\partial\mu}\sum_{s\in S}\rho_{s}\frac{\Xi_{a\otimes s,b\otimes s}}{\varepsilon_{a\otimes s,b\otimes s}}\\
 & = & \ln\frac{\mu_{1,1}\mu_{0,0}}{\mu_{1,0}\mu_{0,1}}+c\left(\left(\rho+\left(1-\rho\right)\right)\left(\frac{\Xi_{T,F}}{\varepsilon_{T,F}}+\frac{\Xi_{F,T}}{\varepsilon_{F,T}}-\frac{\Xi_{T,T}}{\varepsilon_{T,T}}-\frac{\Xi_{F,F}}{\varepsilon_{F,F}}\right)\right)\\
 & = & \ln\frac{\mu_{1,1}\mu_{0,0}}{\mu_{1,0}\mu_{0,1}}+c\left(\frac{\Xi_{T,F}}{\varepsilon_{T,F}}+\frac{\Xi_{F,T}}{\varepsilon_{F,T}}-\frac{\Xi_{T,T}}{\varepsilon_{T,T}}-\frac{\Xi_{F,F}}{\varepsilon_{F,F}}\right)\enskip.\end{eqnarray*}

At independence $\frac{\mu_{1,1}\mu_{0,0}}{\mu_{1,0}\mu_{0,1}}=\frac{\delta_{1}^{2}\delta_{0}^{2}}{\delta_{1}\delta_{0}\delta_{0}\delta_{1}}=1$;
moreover, assuming symmetry of occurrences, we may use independence
of surfaces (cf. fact \ref{fac:Indep-of-surfaces}):\begin{eqnarray*}
\frac{\partial\Lambda}{\partial\mu} & = & c\left(-\frac{\Sigma_{F}^{2}}{\eta_{F}^{2}}-\frac{\Sigma_{T}^{2}}{\eta_{T}^{2}}+2\frac{\Sigma_{F}\Sigma_{T}}{\eta_{F}\eta_{T}}\right)\\
 & = & c\left(\frac{\Sigma_{T}}{\eta_{T}}-\frac{\Sigma_{F}}{\eta_{F}}\right)^{2}\enskip.\end{eqnarray*}

Canceling out this derivative yields:\begin{eqnarray*}
\frac{\Sigma_{T}}{\eta_{T}} & = & \frac{\Sigma_{F}}{\eta_{F}}\enskip.\end{eqnarray*}

Since we assume $\delta=\frac{1}{2}$, we have $\eta_{T}=\eta_{F}=\frac{1}{2}$,
thus $\Sigma_{T}=\Sigma_{F}=\frac{k}{2}$.

It turns out that this condition is sufficient to make the Second
Moment Method work, and numerically we found a critical ratio $c=2.833$
for $\beta_{TFF}=\beta_{FTF}=\beta_{FFT}=0.197633$, $\beta_{TTF}=\beta_{TFT}=\beta_{FTT}=0.104733$
and $\beta_{TTT}=0.0929$. It is noticeable that this critical ratio
is the same for any value of $\rho$; this comes from the fact that
laying down $\eta_{T}=\eta_{F}=\frac{1}{2}$, equations \ref{eq:sol-eps-TF-FT},
\ref{eq:sol-eps-TT} and \ref{eq:sol-eps-FF} imply that equation
\ref{eq:Distributional-gamma_tu} has no dependence in $\rho$.

The First Moment Method applied with these settings (i.e. $\delta=\frac{1}{2}$)
yields a critical ratio of $3.783$ when $\beta_{TFF}=0.191$, which
means that such balanced solutions disappear far below the conjectured
threshold ratio of $4.25$. Thus we would like to evade the $\delta=\frac{1}{2}$
condition.

Moreover \satlab{} enables us to see that real solutions do not have
$\delta=\frac{1}{2}$, see our discussion in section \ref{sub:Non-Independence-Distances}.

\paragraph{Attempts to evade the $\delta=\frac{1}{2}$ condition.}

We plot $\ln\ln\frac{F_{2}}{F_{1}^{2}}$ for different values of $\delta$
and $\rho$, at a point satisfying our constraints \ref{eq:beta_u}
and \ref{eq:beta_t}. The expected value is $-\infty$ iff $\frac{F_{2}}{F_{1}^{2}}=1$.
We set the ratio $c=0.1$ (to be compared with $2.833$, where $\delta=\frac{1}{2}$
works).
\begin{enumerate}
\item setting $\Sigma_{T}=\Sigma_{F}=\frac{k}{2}$: only $\delta=\frac{1}{2}$
seems to make $\frac{F_{2}}{F_{1}^{2}}=1$, cf. figure \ref{fig:delta1/2-eq};%
\begin{figure}
\begin{centering}
\includegraphics[width=0.64\columnwidth]{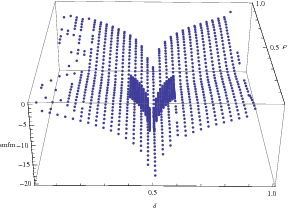}
\par\end{centering}

\caption{\label{fig:delta1/2-eq}$\Sigma_{T}=\Sigma_{F}=\frac{k}{2}$: $-\infty$
is obtained only when $\delta=\frac{1}{2}$.}

\end{figure}

\item setting $\Sigma_{T}=k\eta_{T}$ and $\Sigma_{F}=k\eta_{F}$: once
more, only $\delta=\frac{1}{2}$ seems to make $\frac{F_{2}}{F_{1}^{2}}=1$,
cf. figure \ref{fig:delta1/2-prop}.%
\begin{figure}
\begin{centering}
\includegraphics[width=0.64\columnwidth]{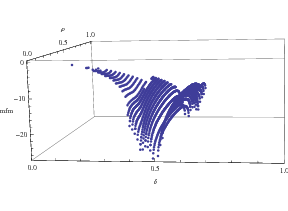}
\par\end{centering}

\caption{\label{fig:delta1/2-prop}$\Sigma_{T}=k\eta_{T}$ and $\Sigma_{F}=k\eta_{F}$:
$-\infty$ is obtained only when $\delta=\frac{1}{2}$.}

\end{figure}

\end{enumerate}
\clearpage{}

\subsubsection{Implicants \label{sub:SM-Implicants}}

An \emph{implicant}\index{implicant} is a partial assignment such
that every assignment of the non-assigned variables will yield a solution.
We represent the non-assigned value of variables by a $*$. We performed
the calculations on implicants with the hope that their variance might
be lower than the solutions'.

\paragraph{Preliminaries.}
\begin{enumerate}
\item the domain of values is $D=\left\{ 0,1,*\right\} $; given a solution,
we call $\delta$ the fraction of variables assigned $1$ and $\alpha$
the fraction of variables assigned $*$; constraints \ref{eq:delta_b}
and \ref{eq:delta_a} become: \begin{eqnarray*}
\delta & = & \mu_{1,1}+\mu_{1,0}+\mu_{1,*}\\
\delta & = & \mu_{1,1}+\mu_{0,1}+\mu_{*,1}\\
\alpha & = & \mu_{*,1}+\mu_{*,0}+\mu_{*,*}\\
\alpha & = & \mu_{1,*}+\mu_{0,*}+\mu_{*,*}\\
1-\delta-\alpha & = & \mu_{0,0}+\mu_{0,1}+\mu_{0,*}\\
1-\delta-\alpha & = & \mu_{0,0}+\mu_{1,0}+\mu_{*,0}\enskip,\end{eqnarray*}
so, if we define $\mu=\mu_{*,*}$, $\nu=\mu_{1,1}$, $\pi=\mu_{1,*}$
and $\pi^{'}=\mu_{*,1}$, we have\foreignlanguage{french}{\begin{eqnarray*}
\mu_{1,0} & = & \delta-\nu-\pi\\
\mu_{0,1} & = & \delta-\nu-\pi^{'}\\
\mu_{*,0} & = & \alpha-\mu-\pi^{'}\\
\mu_{0,*} & = & \alpha-\mu-\pi\\
\mu_{0,0} & = & 1-2\delta-2\alpha+\mu+\nu+\pi+\pi^{'}\enskip;\end{eqnarray*}
}
\item the set of signs is $S=\left\{ +,-\right\} $; we call $\rho$ the
fraction of positive occurrences;
\item the set of truth values is $\mathcal{V}=\left\{ T,F,*\right\} $ and
the truth table is: \begin{tabular}{|c|c|c|}
\hline 
 & $+$ & $-$\tabularnewline
\hline 
$0$ & $F$ & $T$\tabularnewline
\hline 
$1$ & $T$ & $F$\tabularnewline
\hline 
$*$ & $*$ & $*$\tabularnewline
\hline
\end{tabular}; so\begin{eqnarray*}
\eta_{T} & = & \rho\delta+\left(1-\rho\right)\left(1-\delta-\alpha\right)\\
\eta_{F} & = & \left(1-\rho\right)\delta+\rho\left(1-\delta-\alpha\right)\\
\eta_{*} & = & \alpha\end{eqnarray*}
and\begin{eqnarray*}
\varepsilon_{T,T} & = & \rho\nu+\left(1-\rho\right)\left(1-2\delta-2\alpha+\mu+\nu+\pi+\pi^{'}\right)\\
\varepsilon_{T,F} & = & \rho\left(\delta-\nu-\pi\right)+\left(1-\rho\right)\left(\delta-\nu-\pi^{'}\right)\\
\varepsilon_{F,T} & = & \rho\left(\delta-\nu-\pi^{'}\right)+\left(1-\rho\right)\left(\delta-\nu-\pi\right)\\
\varepsilon_{F,F} & = & \rho\left(1-2\delta-2\alpha+\mu+\nu+\pi+\pi^{'}\right)+\left(1-\rho\right)\nu\\
\varepsilon_{*,*} & = & \mu\\
\varepsilon_{T,*} & = & \rho\pi+\left(1-\rho\right)\left(\alpha-\mu-\pi\right)\\
\varepsilon_{*,T} & = & \rho\pi^{'}+\left(1-\rho\right)\left(\alpha-\mu-\pi^{'}\right)\\
\varepsilon_{F,*} & = & \rho\left(\alpha-\mu-\pi\right)+\left(1-\rho\right)\pi\\
\varepsilon_{*,F} & = & \rho\left(\alpha-\mu-\pi^{'}\right)+\left(1-\rho\right)\pi^{'}\enskip;\end{eqnarray*}

\item a clause type is allowed iff it contains at least one $T$.
\end{enumerate}

\paragraph{Condition for $\varepsilon_{v,w}=\eta_{v}\eta_{w}$ at the Independence
Point.}

The independence point is defined by $\mu=\alpha^{2}$, $\nu=\delta^{2}$
and $\pi=\pi^{'}=\alpha\delta$. So at this point we have:

\begin{eqnarray*}
\varepsilon_{T,T} & = & \rho\delta^{2}+\left(1-\rho\right)\left(1-\delta-\alpha\right)^{2}\\
\varepsilon_{T,F}=\varepsilon_{F,T} & = & \delta\left(1-\delta-\alpha\right)\\
\varepsilon_{F,F} & = & \rho\left(1-\delta-\alpha\right)^{2}+\left(1-\rho\right)\delta^{2}\\
\varepsilon_{*,*} & = & \alpha^{2}=\eta_{*}^{2}\\
\varepsilon_{T,*}=\varepsilon_{*,T} & = & \alpha\left(\rho\delta+\left(1-\rho\right)\left(1-\delta-\alpha\right)\right)=\eta_{T}\eta_{*}\\
\varepsilon_{F,*}=\varepsilon_{*,F} & = & \alpha\left(\rho\left(1-\delta-\alpha\right)+\left(1-\rho\right)\delta\right)=\eta_{F}\eta_{*}\enskip.\end{eqnarray*}

The first thing to notice is that all identities involving $*$ satisfy
$\varepsilon_{v,w}=\eta_{v}\eta_{w}$. The second thing to notice
is that if $\varepsilon_{T,F}=\eta_{T}\eta_{F}$, then the three remaining
identities follow, because $\varepsilon_{T,F}+\varepsilon_{T,T}+\varepsilon_{T,*}=\eta_{T}$
etc.\begin{eqnarray*}
\varepsilon_{T,F}-\eta_{T}\eta_{F} & = & \delta\left(1-\delta-\alpha\right)-\left(\rho\delta+\left(1-\rho\right)\left(1-\delta-\alpha\right)\right)\left(\left(1-\rho\right)\delta+\rho\left(1-\delta-\alpha\right)\right)\\
 & = & \delta\left(1-\delta-\alpha\right)-\rho\left(1-\rho\right)\left(\delta^{2}+\left(1-\delta-\alpha\right)^{2}\right)-\left(\rho^{2}+\left(1-\rho\right)^{2}\right)\delta\left(1-\delta-\alpha\right)\\
 & = & 2\rho\left(1-\rho\right)\delta\left(1-\delta-\alpha\right)-\rho\left(1-\rho\right)\left(\delta^{2}+\left(1-\delta-\alpha\right)^{2}\right)\\
 & = & -\rho\left(1-\rho\right)\left(2\delta+\alpha-1\right)^{2}\enskip.\end{eqnarray*}

Consequently, $\varepsilon_{T,F}\leq\eta_{T}\eta_{F}$, with equality
iff $\rho\in\left\{ 0,1\right\} $ or $\delta=\frac{1-\alpha}{2}$.

We discard the particular case of $\rho\in\left\{ 0,1\right\} $ (which
corresponds to monotone \ksat, always trivially satisfiable). It
turns out that as soon as $\delta\neq\frac{1-\alpha}{2}$, we could
not make the Second Moment Method work.

On the other hand, when $\delta=\frac{1-\alpha}{2}$ we could make
the Second Moment Method work.
\begin{rem}
Setting $\alpha=0$ corresponds in fact to solutions (which are some
trivial implicants), thus to some extent we get back to the $\delta=\frac{1}{2}$
condition. But is there a positive $\alpha$ yielding a better lower
bound than $\alpha=0$?
\end{rem}

\paragraph{Condition for the Second Moment Method to Work at $\delta=\frac{1-\alpha}{2}$.}

As mentioned in section \ref{sec:Independence-Point}, stationarity
of the independence point implies that $\frac{\partial\Lambda}{\partial\mu}=\frac{\partial\Lambda}{\partial\nu}=\frac{\partial\Lambda}{\partial\pi}=\frac{\partial\Lambda}{\partial\pi^{'}}=0$.

Using equation \ref{eq:der_mu}, i.e. \begin{eqnarray*}
\frac{\partial\Lambda}{\partial\mu} & = & -\sum_{\left(a,b\right)\in D^{2}}\frac{\partial\mu_{a,b}}{\partial\mu}\ln\mu_{a,b}+c\sum_{\left(a,b\right)\in D^{2}}\frac{\partial\mu_{a,b}}{\partial\mu}\sum_{s\in S}\rho_{s}\frac{\Xi_{a\otimes s,b\otimes s}}{\varepsilon_{a\otimes s,b\otimes s}}\enskip,\end{eqnarray*}

we get:\begin{eqnarray*}
\frac{\partial\Lambda}{\partial\mu} & = & \ln\frac{\mu_{*,0}\mu_{0,*}}{\mu_{*,*}\mu_{0,0}}\\
 &  & +c\left(\left(1-\rho\right)\left(\frac{\Xi_{T,T}}{\varepsilon_{T,T}}+\frac{\Xi_{*,*}}{\eta_{*,*}}-\frac{\Xi_{T,*}}{\varepsilon_{T,*}}-\frac{\Xi_{*,T}}{\varepsilon_{*,T}}\right)+\rho\left(\frac{\Xi_{F,F}}{\varepsilon_{F,F}}+\frac{\Xi_{*,*}}{\eta_{*,*}}-\frac{\Xi_{F,*}}{\varepsilon_{F,*}}-\frac{\Xi_{*,F}}{\varepsilon_{*,F}}\right)\right)\enskip;\\
\frac{\partial\Lambda}{\partial\nu} & = & \ln\frac{\mu_{1,0}\mu_{0,1}}{\mu_{1,1}\mu_{0,0}}+c\left(\frac{\Xi_{T,T}}{\varepsilon_{T,T}}+\frac{\Xi_{F,F}}{\varepsilon_{F,F}}-\frac{\Xi_{T,F}}{\varepsilon_{T,F}}-\frac{\Xi_{F,T}}{\varepsilon_{F,T}}\right)\enskip;\\
\frac{\partial\Lambda}{\partial\pi} & = & \ln\frac{\mu_{1,0}\mu_{0,*}}{\mu_{1,*}\mu_{0,0}}\\
 &  & +c\left(\left(1-\rho\right)\left(\frac{\Xi_{T,T}}{\varepsilon_{T,T}}+\frac{\Xi_{F,*}}{\varepsilon_{F,*}}-\frac{\Xi_{F,T}}{\varepsilon_{F,T}}-\frac{\Xi_{T,*}}{\varepsilon_{T,*}}\right)+\rho\left(\frac{\Xi_{F,F}}{\varepsilon_{F,F}}+\frac{\Xi_{T,*}}{\varepsilon_{T,*}}-\frac{\Xi_{T,F}}{\varepsilon_{T,F}}-\frac{\Xi_{F,*}}{\varepsilon_{F,*}}\right)\right)\enskip;\\
\frac{\partial\Lambda}{\partial\pi^{'}} & = & \ln\frac{\mu_{0,1}\mu_{*,0}}{\mu_{*,1}\mu_{0,0}}\\
 &  & +c\left(\left(1-\rho\right)\left(\frac{\Xi_{T,T}}{\varepsilon_{T,T}}+\frac{\Xi_{*,F}}{\varepsilon_{*,F}}-\frac{\Xi_{T,F}}{\varepsilon_{T,F}}-\frac{\Xi_{*,T}}{\varepsilon_{*,T}}\right)+\rho\left(\frac{\Xi_{F,F}}{\varepsilon_{F,F}}+\frac{\Xi_{*,T}}{\varepsilon_{*,T}}-\frac{\Xi_{F,T}}{\varepsilon_{F,T}}-\frac{\Xi_{*,F}}{\varepsilon_{*,F}}\right)\right)\enskip.\end{eqnarray*}

We consider the independence point, thus $\mu_{a,b}=\delta_{a}\delta_{b}$.
Assuming symmetry of occurrences, we may use independence of surfaces
(cf. fact \ref{fac:Indep-of-surfaces}); moreover, using condition
$\varepsilon_{v,w}=\eta_{v}\eta_{w}$, we get that:\begin{eqnarray*}
\frac{\partial\Lambda}{\partial\mu} & = & c\left(\left(1-\rho\right)\left(\frac{\Sigma_{T}}{\eta_{T}}-\frac{\Sigma_{*}}{\eta_{*}}\right)^{2}+\rho\left(\frac{\Sigma_{F}}{\eta_{F}}-\frac{\Sigma_{*}}{\eta_{*}}\right)^{2}\right)\enskip;\\
\frac{\partial\Lambda}{\partial\nu} & = & c\left(\frac{\Sigma_{T}}{\eta_{T}}-\frac{\Sigma_{F}}{\eta_{F}}\right)^{2}\enskip;\\
\frac{\partial\Lambda}{\partial\pi}=\frac{\partial\Lambda}{\partial\pi^{'}} & = & c\left(\left(1-\rho\right)\left(\frac{\Sigma_{T}}{\eta_{T}}-\frac{\Sigma_{F}}{\eta_{F}}\right)\left(\frac{\Sigma_{T}}{\eta_{T}}-\frac{\Sigma_{*}}{\eta_{*}}\right)+\rho\left(\frac{\Sigma_{F}}{\eta_{F}}-\frac{\Sigma_{T}}{\eta_{T}}\right)\left(\frac{\Sigma_{F}}{\eta_{F}}-\frac{\Sigma_{*}}{\eta_{*}}\right)\right)\enskip.\end{eqnarray*}

Canceling out these derivatives yields:

\begin{eqnarray*}
\frac{\Sigma_{*}}{\eta_{*}} & = & \frac{\Sigma_{T}}{\eta_{T}}=\frac{\Sigma_{F}}{\eta_{F}}\enskip.\end{eqnarray*}

Since we assume $\delta=\frac{1-\alpha}{2}$, we have $\eta_{T}=\eta_{F}=\frac{1-\alpha}{2}$;
moreover $\eta_{*}=\alpha$. Thus $\Sigma_{T}=\Sigma_{F}=k\frac{1-\alpha}{2}$
and $\Sigma_{*}=k\alpha$.

It turns out that this condition is sufficient to make the Second
Moment Method work, and numerically we found the critical ratios laid
in table \ref{tab:Critical-ratios-implicants} for standard \threesat{}
with symmetry of occurrences at $\rho=\frac{1}{2}$.

\begin{table}
\caption[Critical ratios of implicants obtained by the Second Moment Method.]{\label{tab:Critical-ratios-implicants}Critical ratios $c$ of implicants
obtained for a given $\alpha$ (and the corresponding choice of the
free $\beta$ parameters). }

\begin{centering}
\begin{tabular}{|c|c|c|c|c|c|}
\hline 
$\alpha$ & $c$ & $\beta_{TFF}$ & $\beta_{TTF}$ & $\beta_{T**}$ & ratio $c$ of \cite{Boufkhad1999}\tabularnewline
\hline
\hline 
$0.001$ & $2.81$ & $0.195$ & $0.10867$ & $3.33\times10^{-5}$ & $4.5$\tabularnewline
\hline 
$0.01$ & $2.77$ & $0.1942$ & $0.098833$ & $3.33\times10^{-5}$ & $4.5$\tabularnewline
\hline 
$0.05$ & $2.52$ & $0.17767$ & $0.08167$ & $0.001633$ & $2$\tabularnewline
\hline 
$0.08$ & $2.32$ & $0.1633$ & $0.07$ & $0.00233$ & $1.5$\tabularnewline
\hline 
$0.11$ & $2.13$ & $0.1533$ & $0.05467$ & $0.0033$ & $1$\tabularnewline
\hline 
$0.15$ & $1.88$ & $0.13833$ & $0.041$ & $0.012$ & -\tabularnewline
\hline 
$0.2$ & $1.59$ & $0.1233$ & $0.02567$ & $0.02833$ & -\tabularnewline
\hline 
$0.25$ & $1.31$ & $0.10833$ & $0.0133$ & $0.04767$ & -\tabularnewline
\hline 
$0.333$ & $0.89$ & $0.094433$ & $3.33\times10^{-5}$ & $0.094167$ & -\tabularnewline
\hline
\end{tabular}
\par\end{centering}

\bigskip{}

\centering{}\includegraphics[width=0.64\columnwidth]{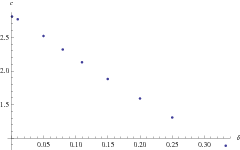}
\end{table}

These values are to be compared with those of Boufkhad \& Dubois -
1999 \cite{Boufkhad1999}, who proved for example that at the ratio
$c=4.5$, any satisfiable instance will have prime implicants with
$\alpha=0.01$. Combined with the lower bound of $3.52$ of \cite{Kaporis2006}
and \cite{Hajiaghayi2003}, this proves that such implicants exist
almost surely when $c\leq3.52$.

Thus in the range $c\in\left(2.81,3.52\right)$ Boufkhad \& Dubois
prove that implicants with $\alpha=0.01$ exist. However, in the range
$c\leq2.81$ the Second Moment Method enable us to establish the existence
of implicants with a $\alpha$ significantly greater than Boufkhad
\& Dubois's.

How can we interpret the fact that the critical $c$ obtained decreases
with $\alpha$? Looking at the set of allowed types of clauses: \begin{eqnarray*}
\mathcal{T} & = & \left\{ TTT,TTF,TFT,FTT,TFF,FTF,FFT,TT*,T*T,*TT,T**,*T*,**T\right\} \\
 &  & \cup\left\{ TF*,T*F,FT*,F*T,*TF,*FT\right\} \enskip,\end{eqnarray*}
we can see that there are $27$ $T$'s, $15$ $F$'s and $15$ $*$'s.
Thus when $*$'s are present, the ratio $T/F$ is $\frac{27}{15}=1.8$.
Without $*$'s, $T/F$ would be $\frac{12}{9}\simeq1.33$ (see section
\ref{sub:Real-True-Surface}). Now, since the Second Moment Method
requires $\Sigma_{T}=\Sigma_{F}$, we can see that it is all the more
artificial as $T/F$ is large. Thus adding $*$'s should cut down
the Second Moment Method's performance.

\subsection{Confrontation of the Second Moment Method with Reality\label{sec:Confrontation-SM-vs-Reality}}

Using \satlab, we investigate the behavior of real solutions and
we emphasize how it differs from the conditions required by the Second
Moment Method that we laid down just above.

Based on numerical calculations of figures \ref{fig:delta1/2-eq}
and \ref{fig:delta1/2-prop}, we conjectured in section \ref{sec:Independence-Point}
that the Second Moment Method might work only if $\varepsilon_{v,w}=\eta_{v}\eta_{w}$.
In this setting we showed that the independence point defined by $\mu_{a,b}=\delta_{a}\delta_{b}$
and $\gamma_{t,u}=\beta_{t}\beta_{u}$ must be a maximum of $T_{2}$,
the second moment. This led us in section \ref{sub:SM-Boolean-Solutions}
to the following necessary condition for the Second Moment Method
to work on boolean solutions: $\delta=\frac{1}{2}$.

Now using \satlab, we are going to give experimental evidence that:
\begin{itemize}
\item real solutions of standard \threesat{} violate all of these conditions:
they are not independent at all!
\item real solutions of standard \threenae{} seem to be rather independent.
\end{itemize}
These observations may explain why the Second Moment Method performs
so poorly on standard \threesat{} (cf. section \ref{sub:SM-Boolean-Solutions})
whereas it works pretty well on \threenae{} (cf. Achlioptas \& Moore
- 2002 \cite{Achlioptas2002}).

\subsubsection{Distances between Solutions \label{sub:Non-Independence-Distances}}

It turns out that in random \threesat, solutions are correlated with
respect to their Hamming distances. Namely their Hamming distances
are not centered around $50\%$ contrary to solutions of random \threenae,
but narrower to each other (cf. figure \ref{fig:SATLab-Hamming-distances}).

What we mean by \emph{Hamming similarity} \index{Hamming similarity}
between two assignments is just the proportion of variables assigned
the same value in both assignments. We took all couples of different
solutions in a sample of random solutions output by a solver, and
we plotted the frequency of the Hamming similarity.

\begin{figure}
\begin{centering}
\includegraphics[width=0.64\textwidth]{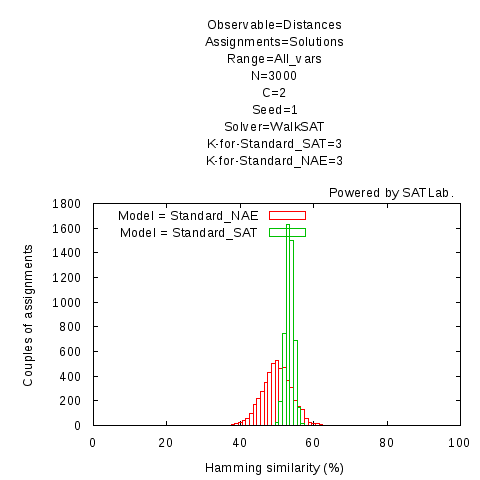}
\par\end{centering}

\caption{\label{fig:SATLab-Hamming-distances}Hamming similarity among solutions
in \nae{} and in \sat.}

\end{figure}

To have a more precise insight into Hamming similarity, we separated
fixed and free variables. Let us recall that a variable is free iff
flipping it yields another solution. We can notice that Hamming similarity
is significantly greater among fixed variables than among free variables
(see figure \ref{fig:SATLab-Hamming-similarity-fixed-vs-free}), and
that it increases with $c$ for both types of variables (see figures
\ref{fig:SATLab-Hamming-similarity-free-c} and \ref{fig:SATLab-Hamming-similarity-fixed-c}).

\begin{figure}
\begin{centering}
\includegraphics[width=0.64\columnwidth]{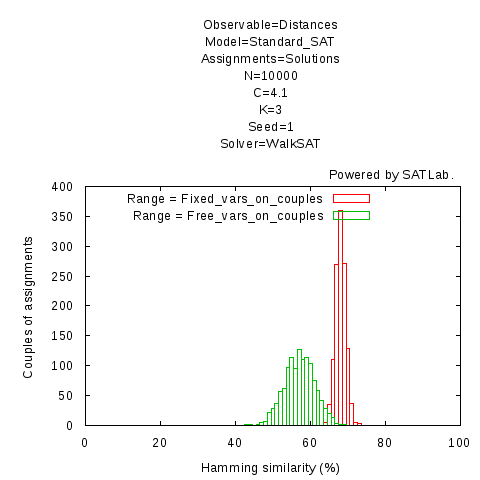}
\par\end{centering}

\caption{\label{fig:SATLab-Hamming-similarity-fixed-vs-free}Hamming similarity
is greater among fixed variables.}

\end{figure}

\begin{figure}
\begin{centering}
\includegraphics[width=0.64\columnwidth]{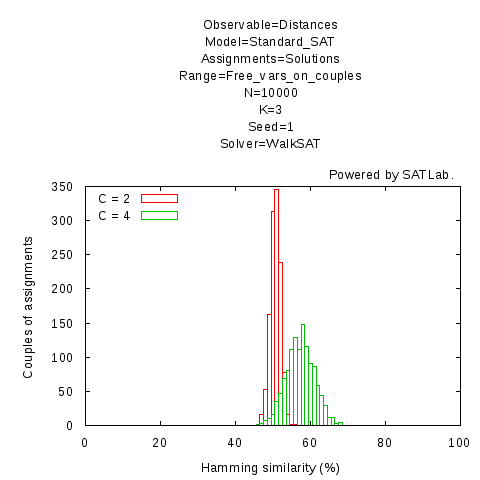}
\par\end{centering}

\caption{\label{fig:SATLab-Hamming-similarity-free-c}Hamming similarity among
free variables increases with $c$.}

\end{figure}

\begin{figure}
\begin{centering}
\includegraphics[width=0.64\columnwidth]{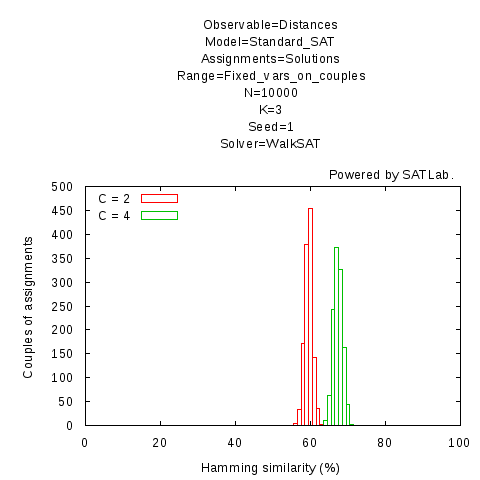}
\par\end{centering}

\caption{\label{fig:SATLab-Hamming-similarity-fixed-c}Hamming similarity among
fixed variables increases with $c$.}

\end{figure}

\clearpage{}

\subsubsection{Surface of True Literals \label{sub:Real-True-Surface}}

What we call true \emph{surface} \index{surface} is the scaled number
of true occurrences of literals. We can see a fundamental difference
between the true surface of fixed variables and the true surface of
free variables. Namely the true surface of fixed variables decreases
with $c$ (figure \ref{fig:SATLab-True-surface-fixed-variables})
whereas the true surface of free variables increases with $c$ (figure
\ref{fig:SATLab-True-surface-free-variables}). Note that both quantities
converge to roughly $0.56$ (i.e. roughly $\frac{4}{7}$) when $c$
approaches the threshold ratio, whereas in section \ref{sub:SM-Boolean-Solutions}
we got the following condition: $\Sigma_{T}=\Sigma_{F}=\frac{k}{2}$
to make the Second Moment Method work.

We interpret the ratio $\frac{4}{7}$ as follows: the allowed types
of clauses are \[
\left\{ TTT,TTF,TFT,FTT,TFF,FTF,FFT\right\} \enskip,\]
which amounts to $12$ $T$'s and $9$ $F$'s. Now $\frac{12}{12+9}=\frac{12}{21}=\frac{4}{7}$.

\begin{figure}
\begin{centering}
\includegraphics[width=0.64\columnwidth]{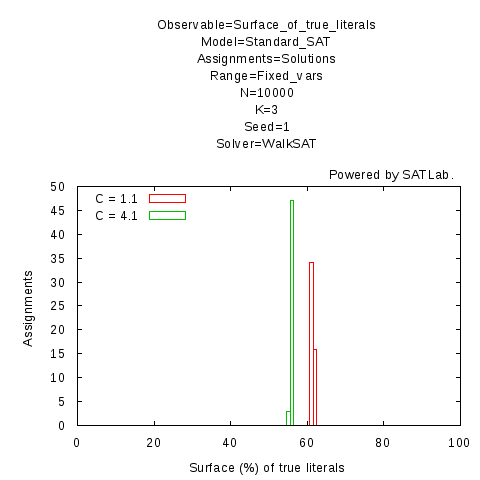}
\par\end{centering}

\caption{\label{fig:SATLab-True-surface-fixed-variables}The true surface of
fixed variables decreases with $c$.}

\end{figure}

\begin{figure}
\begin{centering}
\includegraphics[width=0.64\columnwidth]{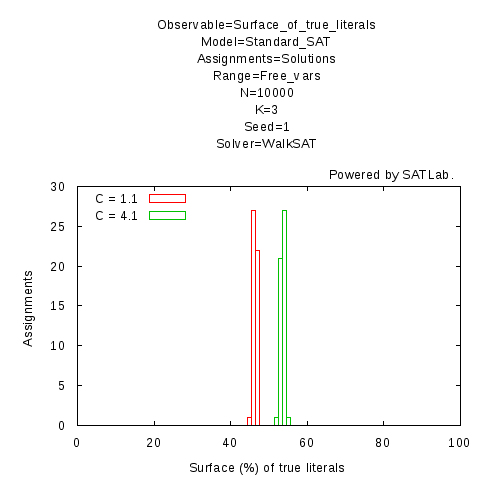}
\par\end{centering}

\caption{\label{fig:SATLab-True-surface-free-variables}The true surface of
free variables increases with $c$.}

\end{figure}

\subsubsection{Non-Independence of True / False Surfaces \label{sub:Non-Independence-of-Surfaces}}

Let us consider two solutions $S_{1}$ and $S_{2}$. We denote by
$\Sigma_{F}$ the false surface under solution $S_{1}$, $\Sigma_{T}$
the true surface under solution $S_{2}$, and $\Xi_{FT}$ the surface
which is false under $S_{1}$ and true under $S_{2}$. In a given
sample of random solutions, we took all couples of different solutions
$\left(S_{1},S_{2}\right)$ and computed the ratio $\frac{\Sigma_{F}\Sigma_{T}}{k\Xi_{FT}}$;
the histogram in figure \ref{fig:SATLab-Independence-of-surfaces}
plots the frequency of this ratio for the solutions of two different
models of formulas: random \threenae{} and random \threesat. Although
some independence seems to exist in \threenae{} (i.e. the ratio is
centered around $1$), it can be seen that there is no independence
of these surfaces for random \threesat.

\begin{figure}
\begin{centering}
\includegraphics[width=0.64\columnwidth]{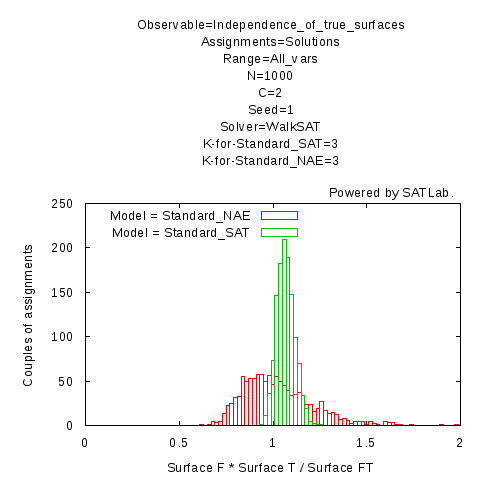}
\par\end{centering}

\caption{\label{fig:SATLab-Independence-of-surfaces}(Non-)independence of
surfaces in \nae{} and in \sat. }

\end{figure}

To have a more precise insight into the non-independence, we separated
fixed and free variables. Let us recall that a variable is free iff
flipping it yields another solution. We can notice that non-independence
comes from both free and fixed variables, but rather from fixed variables
than from free variables, cf. figure \ref{fig:SATLab-Non-independence-of-surfaces-from-fixed}.

\begin{figure}
\begin{centering}
\includegraphics[width=0.64\columnwidth]{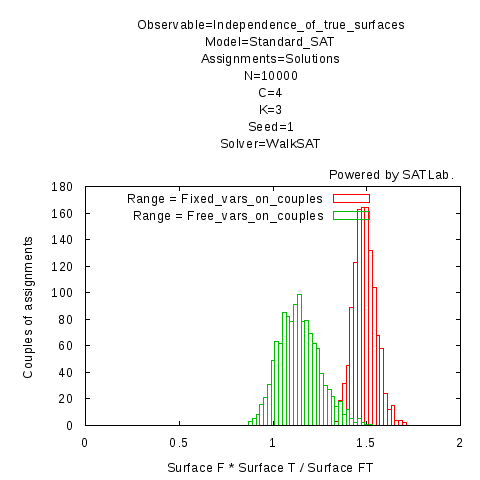}
\par\end{centering}

\caption{\label{fig:SATLab-Non-independence-of-surfaces-from-fixed}The non-independence
of surfaces comes rather from fixed variables.}

\end{figure}

\subsubsection{Non-Independence of Clauses Types}

Here what we call \emph{clause type}\index{clause type} is the number
of true occurrences of variables in the clause. Let us consider two
solutions $S_{1}$ and $S_{2}$. We denote by $\beta_{1}$ the proportion
of uniquely satisfied clauses under solution $S_{1}$, $b_{1}$ the
proportion of uniquely satisfied clauses under solution $S_{2}$,
and $\gamma_{1,1}$ the proportion of clauses which are uniquely satisfied
under $S_{1}$ and under $S_{2}$. In a given sample of random solutions,
we took all couples of different solutions $\left(S_{1},S_{2}\right)$
and computed the ratio $\frac{\beta_{1}b_{1}}{\gamma_{1,1}}$; the
histogram in figure \ref{fig:SATLab-independence-of-unisat-clauses}
plots the frequency of this ratio for the solutions of two different
kinds of assignments: solutions of a random \threenae{} formula and
solutions of a random \threesat{} formula. Although independence
seems to happen among solutions of random \threenae{} (i.e. the ratio
is centered around $1$), it can be seen that there is no independence
in random \threesat.

In real instances we can assume symmetry of occurrences, in the sense
of section \ref{sub:Symmetry-of-Occurrences}; so in the light of
fact \ref{fac:Indep-of-surfaces}, we could conclude from the non-independence
of surfaces observed in section \ref{sub:Non-Independence-of-Surfaces}
that in the real solutions of \threesat{} independence of clauses
types would not hold.

\begin{figure}
\begin{centering}
\includegraphics[width=0.64\columnwidth]{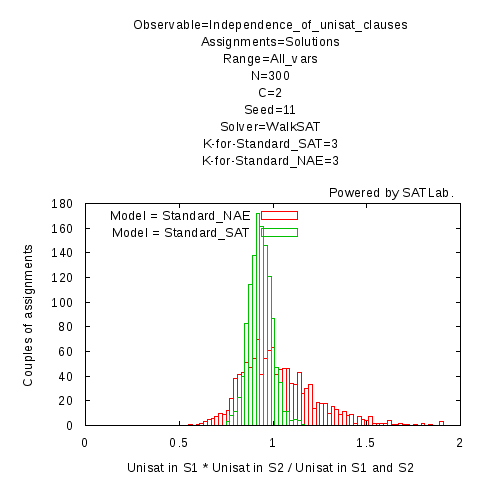}
\par\end{centering}

\caption{\label{fig:SATLab-independence-of-unisat-clauses}(Non-)independence
of uniquely satisfied clauses in \nae{} and in \sat.}

\end{figure}

\section{Second Moment Method on Distributional Random \ksat\label{cha:Distributional-SM}}

Using the standard distributional model instead of the standard drawing
model yields better upper bounds on the satisfiability threshold.
Moreover, we would like to gain some more control over the proportion
of variables assigned $1$ according to the imbalance between their
positive and negative occurrences. Namely, a variable is all the more
expected to be assigned $1$ in a solution as it has more positive
occurrences, and vice-versa. At least this seems to happen on real
solutions, see figure \ref{fig:SATLab-Indep-deltatp-c2} in section
\ref{sub:dpq-Non-Independence-of-Values}.

That is the reasons why we are going to implement the Second Moment
Method in the distributional model.

In this section we follow roughly the same outline than in section
\ref{cha:General-Framework-SMM}, but we focus on solutions only;
we only emphasize the major differences with respect to the general
framework of section \ref{cha:General-Framework-SMM}.

\subsection{Preliminaries}

\subsubsection{Occurrences and Signs}

We still have $n$ \emph{variables}. We denote by $d_{p,q}$ the fraction
of variables having $p$ positive and $q$ negative occurrences.

\begin{eqnarray*}
\sum_{\left(p,q\right)\in\boldsymbol{N}^{2}}d_{p,q} & = & 1\enskip;\\
\sum_{\left(p,q\right)\in\boldsymbol{N}^{2}}\left(p+q\right)d_{p,q} & = & kc\enskip.\end{eqnarray*}

Occurrences and signs of variables are determined \emph{a priori}.

We ought to consider light and heavy variables $\mathcal{L}$ and
$\mathcal{H}$ (cf. \cite{Diaz2009}). In fact we are going not to
worry about that, because they make the calculation heavier, and in
the end we shall see that there is no need to be rigorous since we
only have negative results.

\subsubsection{Values}

Given a boolean assignment, we denote by $\delta_{p,q}$ the proportion
of variables with $p$ positive and $q$ negative occurrences which
are assigned $1$. Thus the proportion of variables with $p$ positive
and $q$ negative occurrences which are assigned $0$ is $1-\delta_{p,q}$.

Given two assignments $S_{1}$ and $S_{2}$, for all $\left(a,b\right)\in D^{2}$,
considering variables with $p$ positive and $q$ negative occurrences,
we denote by:
\begin{itemize}
\item $\lambda_{p,q}$ the proportion of variables which are assigned $1$
in $S_{1}$ and $S_{2}$;
\item $\mu_{p,q}$ the proportion of variables which are assigned $1$ in
$S_{1}$ and $0$ in $S_{2}$;
\item $\mu_{p,q}^{'}$ the proportion of variables which are assigned $0$
in $S_{1}$ and $1$ in $S_{2}$;
\item $\nu_{p,q}$ the proportion of variables which are assigned $0$ in
$S_{1}$ and $S_{2}$.
\end{itemize}
We have the following constraints:

\begin{eqnarray*}
\delta_{p,q} & = & \lambda_{p,q}+\mu_{p,q}\\
\delta_{p,q} & = & \lambda_{p,q}+\mu_{p,q}^{'}\\
1-\delta_{p,q} & = & \mu_{p,q}+\nu_{p,q}\\
1-\delta_{p,q} & = & \mu_{p,q}^{'}+\nu_{p,q}\end{eqnarray*}

thus\begin{eqnarray*}
\lambda_{p,q} & = & \delta_{p,q}-\mu_{p,q}\\
\mu_{p,q}^{'} & = & \mu_{p,q}\\
\nu_{p,q} & = & 1-\delta_{p,q}-\mu_{p,q}\end{eqnarray*}

which enables us to work only with $\mu_{p,q}$.

\subsubsection{Truth Values }

\begin{eqnarray*}
\eta_{T} & = & \frac{1}{kc}\sum_{\left(p,q\right)\in\boldsymbol{N}^{2}}\left(p\delta_{p,q}+q\left(1-\delta_{p,q}\right)\right)d_{p,q}\\
\eta_{F} & = & \frac{1}{kc}\sum_{\left(p,q\right)\in\boldsymbol{N}^{2}}\left(q\delta_{p,q}+p\left(1-\delta_{p,q}\right)\right)d_{p,q}\\
\varepsilon_{T,T} & = & \frac{1}{kc}\sum_{\left(p,q\right)\in\boldsymbol{N}^{2}}\left(p\left(\delta_{p,q}-\mu_{p,q}\right)+q\left(1-\delta_{p,q}-\mu_{p,q}\right)\right)d_{p,q}\\
\varepsilon_{F,F} & = & \frac{1}{kc}\sum_{\left(p,q\right)\in\boldsymbol{N}^{2}}\left(q\left(\delta_{p,q}-\mu_{p,q}\right)+p\left(1-\delta_{p,q}-\mu_{p,q}\right)\right)d_{p,q}\\
\varepsilon_{T,F}=\varepsilon_{F,T} & = & \frac{1}{kc}\sum_{\left(p,q\right)\in\boldsymbol{N}^{2}}\left(p+q\right)\mu_{p,q}d_{p,q}\end{eqnarray*}

\subsubsection{Types of Clauses and Surfaces}

We keep the same definitions as in section \ref{sub:General-Framework-Preliminaries}.

However, some extra constraints on surfaces occur in the distributional
model, because here all occurrences and signs of variables are determined
\emph{a priori}:\begin{eqnarray}
\Sigma_{v} & = & k\eta_{v}\enskip;\label{eq:constr-surface-FM}\\
\Xi_{v,w} & = & k\varepsilon_{v,w}\enskip.\label{eq:constr-surface-SM}\end{eqnarray}

\subsection{Expression of the First Moment}

The first moment of the number $X$ of solutions can be split up into
the following factors: total number of assignments and probability
for an assignment to be a solution.
\begin{enumerate}
\item total number of assignments: choose subsets of variables assigned
$0$ or $1$:\\
$\prod_{\left(p,q\right)\in\mathcal{L}}{d_{p,q}n \choose \delta_{p,q}d_{p,q}n}$;
\item probability for an assignment to be a solution:

\begin{enumerate}
\item number of satisfied formulas:

\begin{enumerate}
\item we give each clause an allowed type $t\in\mathcal{T}$: ${cn \choose \dots\left(\beta_{t}cn\right)_{t\in\mathcal{T}}\dots}$
\item we find a permutation of the true literals into the true boxes and
a permutation of the false literals into the false boxes: $\left(\eta_{T}kcn\right)!\left(\eta_{F}kcn\right)!$
\end{enumerate}
\item total number of formulas, i.e. number of permutations of the occurrences
of literals into the boxes: $\left(kcn\right)!$
\end{enumerate}
\end{enumerate}
We denote by $\mathcal{P}$ the set of all families of non-negative
numbers $\left(\left(\delta_{p,q}\right)_{\left(p,q\right)\in\mathcal{L}},\left(\beta_{t}\right)_{t\in\mathcal{T}}\right)$
satisfying constraint \ref{eq:sumbetas1}. We denote by $\mathcal{I}\left(n\right)$
the intersection of $\mathcal{P}$ with the multiples of $\frac{1}{n}$;
we get the following expression of the first moment:

\begin{eqnarray*}
\mathrm{E}X & = & \sum_{\left(\left(\delta_{p,q}\right)_{\left(p,q\right)\in\mathcal{L}},\left(\beta_{t}\right)_{t\in\mathcal{T}}\right)\in\mathcal{I}\left(n\right)}T_{1}\left(n\right)\end{eqnarray*}

where

\begin{eqnarray*}
T_{1}\left(n\right) & = & \frac{\prod_{\left(p,q\right)\in\mathcal{L}}{d_{p,q}n \choose \delta_{p,q}d_{p,q}n}{cn \choose \dots\left(\beta_{t}cn\right)_{t\in\mathcal{T}}\dots}}{{kcn \choose \eta_{T}kcn}}\enskip.\end{eqnarray*}

The exponential equivalent of $T_{1}\left(n\right)$ is $T_{1}^{n}$,
where

\begin{eqnarray*}
T_{1} & = & \frac{1}{\prod_{\left(p,q\right)\in\mathcal{L}^{2}}\delta_{p,q}^{\delta_{p,q}d_{p,q}}}\left(\frac{\left(\eta_{T}^{\eta_{T}}\eta_{F}^{\eta_{F}}\right)^{k}}{\prod_{t\in\mathcal{T}}\beta_{t}^{\beta_{t}}}\right)^{c}\enskip.\end{eqnarray*}

\subsection{Expression of the Second Moment}

The second moment of $X$ can be split up into the following factors:
total number of assignments and probability for an assignment to be
a solution.
\begin{enumerate}
\item total number of assignments: choose subsets of variables assigned
$0$ or $1$:\\
$\prod_{\left(p,q\right)\in\mathcal{L}}{d_{p,q}n \choose \lambda_{p,q}d_{p,q}n,\mu_{p,q}d_{p,q}n,\mu_{p,q}d_{p,q}n,\nu_{p,q}d_{p,q}n}$;
\item probability for an assignment to be a solution:

\begin{enumerate}
\item number of satisfied formulas:

\begin{enumerate}
\item we give each clause two allowed types: ${cn \choose \dots\left(\gamma_{t,u}cn\right)_{\left(t,u\right)\in\mathcal{T}^{2}}\dots}$
\item we find a permutation of the literals into the corresponding boxes:\\
$\left(\varepsilon_{T,T}kcn\right)!\left(\varepsilon_{T,F}kcn\right)!\left(\varepsilon_{F,T}kcn\right)!\left(\varepsilon_{F,F}kcn\right)!$
\end{enumerate}
\item total number of formulas, i.e. number of permutations of the occurrences
of literals into the boxes: $\left(kcn\right)!$
\end{enumerate}
\end{enumerate}
We denote by $\mathcal{P}_{2}$ the set of all families of non-negative
numbers $\left(\left(\mu_{p,q}\right)_{\left(p,q\right)\in\mathcal{L}},\left(\gamma_{t,u}\right)_{\left(t,u\right)\in\mathcal{T}^{2}}\right)$
satisfying constraints \ref{eq:beta_u} and \ref{eq:beta_t}. We denote
by $\mathcal{I}_{2}\left(n\right)$ the intersection of $\mathcal{P}_{2}$
with the multiples of $\frac{1}{n}$; we get the following expression
of the second moment:

\begin{eqnarray*}
\mathrm{E}X^{2} & = & \sum_{\left(\left(\mu_{p,q}\right)_{\left(p,q\right)\in\mathcal{L}},\left(\gamma_{t,u}\right)_{\left(t,u\right)\in\mathcal{T}^{2}}\right)\in\mathcal{I}_{2}\left(n\right)}T_{2}\left(n\right)\end{eqnarray*}

where

\begin{eqnarray*}
T_{2}\left(n\right) & = & \frac{\prod_{\left(p,q\right)\in\mathcal{L}}{d_{p,q}n \choose \lambda_{p,q}d_{p,q}n,\mu_{p,q}d_{p,q}n,\mu_{p,q}d_{p,q}n,\nu_{p,q}d_{p,q}n}{cn \choose \dots\left(\gamma_{t,u}cn\right)_{\left(t,u\right)\in\mathcal{T}^{2}}\dots}}{{kcn \choose \varepsilon_{T,T}kcn,\varepsilon_{T,F}kcn,\varepsilon_{F,T}kcn,\varepsilon_{F,F}kcn}}\enskip.\end{eqnarray*}

The exponential equivalent of $T_{2}\left(n\right)$ is $T_{2}^{n}$,
where

\begin{eqnarray*}
T_{2} & = & \frac{1}{\prod_{\left(p,q\right)\in\mathcal{L}}\left(\left(\delta_{p,q}-\mu_{p,q}\right)^{\delta_{p,q}-\mu_{p,q}}\mu_{p,q}^{2\mu_{p,q}}\left(1-\delta_{p,q}-\mu_{p,q}\right)^{1-\delta_{p,q}-\mu_{p,q}}\right)^{d_{p,q}}}\\
 &  & \cdot\left(\frac{\left(\varepsilon_{T,T}^{\varepsilon_{T,T}}\varepsilon_{T,F}^{\varepsilon_{T,F}}\varepsilon_{F,T}^{\varepsilon_{F,T}}\varepsilon_{F,F}^{\varepsilon_{F,F}}\right)^{k}}{\prod_{\left(t,u\right)\in\mathcal{T}^{2}}\gamma_{t,u}^{\gamma_{t,u}}}\right)^{c}\enskip.\end{eqnarray*}

\subsection{Expression of the Lagrangian\label{sub:SM-dpq-Lagrangian}}

When the parameters of the first moment (i.e. $\left(\delta_{p,q}\right)_{\left(p,q\right)\in\mathcal{L}},\left(\beta_{t}\right)_{t\in\mathcal{T}}$)
are chosen, $T_{2}$ must be maximized under constraints \ref{eq:beta_u}
and \ref{eq:beta_t}. That leads us to use the Lagrange multipliers
method. In order to make the forthcoming maximization easier, we introduce
some extra variables $\psi_{v,w}$ which are going to simulate $\varepsilon_{v,w}$.
The reason for this is that $\varepsilon_{v,w}$ contains $\mu_{p,q}$,
but we need the expression of $\mu_{p,q}$ for our numerical calculations.
So, because of equation \ref{eq:constr-surface-SM}, we have the following
constraints:

\begin{eqnarray*}
k\varepsilon_{T,T}=k\psi_{T,T} & = & \Xi_{T,T}\\
k\varepsilon_{T,F}=k\psi_{T,F} & = & \Xi_{T,F}\\
k\varepsilon_{F,T}=k\psi_{F,T} & = & \Xi_{F,T}\\
k\varepsilon_{F,F}=k\psi_{F,F} & = & \Xi_{F,F}\end{eqnarray*}

Using the facts that $\varepsilon_{T,T}+\varepsilon_{T,F}=\eta_{T}$,
$\Xi_{T,T}+\Xi_{T,F}=\Sigma_{T}$ and $\Sigma_{T}=k\eta_{T}$, we
see that constraint $\varepsilon_{T,T}=\psi_{T,T}$ is redundant.
Eliminating $\varepsilon_{F,T}$ and $\varepsilon_{F,F}$ as well,
there remain the following 5 constraints:

\begin{eqnarray}
\varepsilon_{T,F} & = & \psi_{T,F}\label{eq:c1}\\
\psi_{T,T} & = & \eta_{T}-\psi_{T,F}\label{eq:c2}\\
\psi_{T,F} & = & \frac{\Xi_{T,F}}{k}\label{eq:c3}\\
\psi_{F,T} & = & \psi_{T,F}\label{eq:c4}\\
\psi_{F,F} & = & \eta_{F}-\psi_{T,F}\label{eq:c5}\end{eqnarray}

So we define the following Lagrangian:

\begin{eqnarray*}
\Lambda & = & -\sum_{\left(p,q\right)\in\mathcal{L}}d_{p,q}\left(\left(\delta_{p,q}-\mu_{p,q}\right)\ln\frac{\delta_{p,q}-\mu_{p,q}}{e}+\left(1-\delta_{p,q}-\mu_{p,q}\right)\ln\frac{1-\delta_{p,q}-\mu_{p,q}}{e}\right)\\
 &  & -2\sum_{\left(p,q\right)\in\mathcal{L}}d_{p,q}\mu_{p,q}\ln\frac{\mu_{p,q}}{e}-c\sum_{\left(t,u\right)\in\mathcal{T}^{2}}\gamma_{t,u}\ln\frac{\gamma_{t,u}}{e}+kc\sum_{\left(v,w\right)\in\left\{ T,F\right\} ^{2}}\psi_{v,w}\ln\frac{\psi_{v,w}}{e}\\
 &  & +c\sum_{\left(v,w\right)\in\left\{ T,F\right\} ^{2}}\left(\ln h_{v,w}\right)\left(\Xi_{v,w}-k\psi_{v,w}\right)+kc\left(\ln y\right)\left(\psi_{T,F}-\varepsilon_{T,F}\right)\\
 &  & +c\sum_{t\in\mathcal{T}}\left(\ln f_{t}\right)\left(\sum_{u\in\mathcal{T}}\gamma_{t,u}-\beta_{t}\right)+c\sum_{u\in\mathcal{U}}\left(\ln g_{u}\right)\left(\sum_{t\in\mathcal{T}}\gamma_{t,u}-\beta_{u}\right)\enskip.\end{eqnarray*}

\subsubsection{Derivative with respect to $\gamma_{t,u}$}

\begin{eqnarray*}
\frac{\partial\Lambda}{\partial\gamma_{t,u}} & = & -c\ln\gamma_{t,u}+c\sum_{i=1}^{k}\ln h_{t_{i},u_{i}}+c\ln f_{t}+c\ln g_{u}\enskip.\end{eqnarray*}

Canceling out this derivative yields:

\begin{eqnarray}
\gamma_{t,u} & = & f_{t}g_{u}\prod_{i=1}^{k}h_{t_{i},u_{i}}\enskip.\label{eq:Distributional-gamma_tu}\end{eqnarray}

\subsubsection{Derivative with respect to $\psi_{v,w}$}

\begin{eqnarray*}
\frac{\partial\Lambda}{\partial\psi_{v,w}} & = & kc\ln\psi_{v,w}-kc\ln h_{v,w}+kc\left(\ln y\right)\boldsymbol{1}_{v=T\land w=F}\enskip.\end{eqnarray*}

Canceling out these derivatives yields:

\begin{eqnarray}
\psi_{v,w} & = & h_{v,w}y^{-\boldsymbol{1}_{v=T\land w=F}}\enskip.\label{eq:psi_vw}\end{eqnarray}

Thus constraints \ref{eq:c1}, \ref{eq:c2}, \ref{eq:c3}, \ref{eq:c4}
and \ref{eq:c5} become:\begin{eqnarray*}
\varepsilon_{T,F} & = & \frac{h_{T,F}}{y}\\
h_{T,T} & = & \eta_{T}-\frac{h_{T,F}}{y}\\
\frac{h_{T,F}}{y} & = & \frac{\Xi_{T,F}}{k}\\
h_{F,T} & = & \frac{h_{T,F}}{y}\\
h_{F,F} & = & \eta_{F}-\frac{h_{T,F}}{y}\end{eqnarray*}

\subsubsection{Derivative with respect to $\mu_{p,q}$\label{sub:Derivative-mu-pq}}

\begin{eqnarray}
\frac{\partial\Lambda}{\partial\mu_{p,q}} & = & d_{p,q}\ln\frac{\left(\delta_{p,q}-\mu_{p,q}\right)\left(1-\delta_{p,q}-\mu_{p,q}\right)}{\mu_{p,q}^{2}}-\left(p+q\right)d_{p,q}\ln y\enskip.\label{eq:der-mupq}\end{eqnarray}

Canceling out this derivative yields:

\begin{eqnarray*}
\left(\delta_{p,q}-\mu_{p,q}\right)\left(1-\delta_{p,q}-\mu_{p,q}\right) & = & \mu_{p,q}^{2}y^{p+q}\enskip,\end{eqnarray*}

i.e.\begin{eqnarray*}
\mu_{p,q}^{2}\left(1-y^{p+q}\right)-\mu_{p,q}+\delta_{p,q}\left(1-\delta_{p,q}\right) & = & 0\enskip.\end{eqnarray*}

Thus there are 2 cases to consider:
\begin{enumerate}
\item case where $y=1$ or $p+q=0$: $\mu_{p,q}=\delta_{p,q}\left(1-\delta_{p,q}\right)$;
\item case where $y\neq1$ and $p+q\neq0$: $\mu_{p,q}=\frac{1\pm\sqrt{1-4\left(1-y^{p+q}\right)\delta_{p,q}\left(1-\delta_{p,q}\right)}}{2\left(1-y^{p+q}\right)}$;
numerically we can find solutions with $\mu_{p,q}=\frac{1-\sqrt{1-4\left(1-y^{p+q}\right)\delta_{p,q}\left(1-\delta_{p,q}\right)}}{2\left(1-y{}^{p+q}\right)}$.
\end{enumerate}

\subsection{Independence Point}

As in section \ref{sec:Independence-Point}, we define the independence
point by $\mu_{p,q}=\delta_{p,q}\left(1-\delta_{p,q}\right)$ and
$\gamma_{t,u}=\beta_{t}\beta_{u}$. Again, we were able to make the
Second Moment Method work only if $\varepsilon_{v,w}=\eta_{v}\eta_{w}$.

When $\varepsilon_{v,w}=\eta_{v}\eta_{w}$, we have $\frac{T_{2}}{T_{1}^{2}}=1$
(see proof of theorem \ref{thm:1-at-independence}) and the independence
point is stationary without any extra condition (plug $h_{v,w}=\eta_{v}\eta_{w}$,
$y=1$, $f_{t}=\frac{\beta_{t}}{\prod_{i=1}^{k}\eta_{t_{i}}}$, $g_{u}=\frac{\beta_{u}}{\prod_{i=1}^{k}\eta_{u_{i}}}$
into the constraints and equations \ref{eq:Distributional-gamma_tu},
\ref{eq:psi_vw} and \ref{eq:der-mupq}), assuming symmetry of occurrences
as usual (and thus fact \ref{fac:Indep-of-surfaces}). By comparison
with chapter \ref{cha:General-Framework-SMM}, we could say that the
extra condition we had there on the surfaces to make the independence
point stationary corresponds here to the preliminary extra constraint
\ref{eq:constr-surface-SM}.

Moreover, it is noteworthy that, because of fact \ref{fac:Indep-of-surfaces}
and constraint \ref{eq:constr-surface-FM}, the independence point
violates constraint \ref{eq:constr-surface-SM} when $\varepsilon_{v,w}\neq\eta_{v}\eta_{w}$.

So, what is the condition for $\varepsilon_{v,w}=\eta_{v}\eta_{w}$?

\subsubsection{Condition for $\varepsilon_{v,w}=\eta_{v}\eta_{w}$ at the Independence
Point\label{sub:Condition-for-vw-dpq}}

As before in section \ref{sub:SM-Boolean-Solutions}, the first thing
to notice is that if $\varepsilon_{T,F}=\eta_{T}\eta_{F}$, then the
three other identities follow, because $\varepsilon_{T,F}+\varepsilon_{T,T}=\eta_{T}$
etc.

We are grateful to Emmanuel Lepage, who gave us the main idea to compare
$\varepsilon_{T,F}$ and $\eta_{T}\eta_{F}$. Let us make the following
change of variables: $\delta_{p,q}=\frac{1}{2}+\delta_{p,q}^{'}$:

\begin{eqnarray*}
\eta_{T} & = & \frac{1}{2}+\frac{1}{kc}\sum_{\left(p,q\right)\in\boldsymbol{N}^{2}}\left(p-q\right)\delta_{p,q}^{'}d_{p,q}\enskip;\\
\eta_{F} & = & \frac{1}{2}-\frac{1}{kc}\sum_{\left(p,q\right)\in\boldsymbol{N}^{2}}\left(p-q\right)\delta_{p,q}^{'}d_{p,q}\enskip;\\
\varepsilon_{T,F} & = & \frac{1}{4}-\frac{1}{kc}\sum_{\left(p,q\right)\in\boldsymbol{N}^{2}}\left(p+q\right)\delta_{p,q}^{'2}d_{p,q}\enskip.\end{eqnarray*}

Thus

\begin{eqnarray*}
\varepsilon_{T,F}-\eta_{T}\eta_{F} & = & \left(\frac{1}{kc}\sum_{\left(p,q\right)\in\boldsymbol{N}^{2}}\left(p-q\right)\delta_{p,q}^{'}d_{p,q}\right)^{2}-\frac{1}{kc}\sum_{\left(p,q\right)\in\boldsymbol{N}^{2}}\left(p+q\right)\delta_{p,q}^{'2}d_{p,q}\enskip.\end{eqnarray*}

\begin{enumerate}
\item $\left(\frac{1}{kc}\sum_{\left(p,q\right)\in\boldsymbol{N}^{2}}\left(p-q\right)\delta_{p,q}^{'}d_{p,q}\right)^{2}\leq\left(\frac{1}{kc}\sum_{\left(p,q\right)\in\boldsymbol{N}^{2}}\left|\left(p-q\right)\delta_{p,q}^{'}\right|d_{p,q}\right)^{2}$,
with equality iff\\
$\left(p-q\right)\delta_{p,q}^{'}$ has the same sign wherever
$d_{p,q}\neq0$;
\item Since $p,q\geq0$, $\left|p-q\right|\leq p+q$ with equality iff $p=0$
or $q=0$;\\
thus $\left(\frac{1}{kc}\sum_{\left(p,q\right)\in\boldsymbol{N}^{2}}\left|\left(p-q\right)\delta_{p,q}^{'}\right|d_{p,q}\right)^{2}\leq\left(\frac{1}{kc}\sum_{\left(p,q\right)\in\boldsymbol{N}^{2}}\sqrt{p+q}^{2}\left|\delta_{p,q}^{'}\right|\sqrt{d_{p,q}}^{2}\right)^{2}$,
with equality iff $\delta_{p,q}^{'}=0$ wherever $p\neq0$, $q\neq0$
and $d_{p,q}\neq0$;
\item By the Cauchy-Schwartz inequality,\begin{eqnarray*}
 &  & \left(\frac{1}{kc}\sum_{\left(p,q\right)\in\boldsymbol{N}^{2}}\sqrt{p+q}^{2}\left|\delta_{p,q}^{'}\right|\sqrt{d_{p,q}}^{2}\right)^{2}\\
 & \leq & \left(\frac{1}{kc}\sum_{\left(p,q\right)\in\boldsymbol{N}^{2}}\sqrt{p+q}^{2}\sqrt{d_{p,q}}^{2}\right)\left(\frac{1}{kc}\sum_{\left(p,q\right)\in\boldsymbol{N}^{2}}\sqrt{p+q}^{2}\left|\delta_{p,q}^{'}\right|^{2}\sqrt{d_{p,q}}^{2}\right)\\
 & = & \frac{1}{kc}\sum_{\left(p,q\right)\in\boldsymbol{N}^{2}}\left(p+q\right)\delta_{p,q}^{'2}d_{p,q}\enskip,\end{eqnarray*}
with equality iff $\left|\delta_{p,q}^{'}\right|$ has the same value
wherever $\left(p+q\right)d_{p,q}\neq0$.
\end{enumerate}
To conclude, $\varepsilon_{T,F}\leq\eta_{T}\eta_{F}$ with equality
iff ($\delta_{p,q}^{'}=0$ whenever $\left(p+q\right)d_{p,q}\neq0$)
or ($\delta_{p,q}^{'}$ is symmetric in $p,q$ and the model has only
pure literals).

This means that in all models allowing non-pure literals (in particular
the standard model having a 2D-Poisson $d_{p,q}$), $\varepsilon_{T,F}=\eta_{T}\eta_{F}$
iff $\delta_{p,q}=\frac{1}{2}$ whenever $\left(p+q\right)d_{p,q}\neq0$.

Consequently, even in the distributional model, we encounter the very
restrictive condition $\delta_{p,q}=\frac{1}{2}$ to make the Second
Moment Method work.

Numerically, we found a critical ratio of $2.838$, thus very slightly
above the $2.833$ obtained in the drawing model (cf section \ref{sub:SM-Boolean-Solutions}).

\subsubsection{Attempts to evade the $\delta_{p,q}=\frac{1}{2}$ condition \label{sub:Attempts-to-evade-dpq-1/2}}

The shape of $\delta_{p,q}$ on figures \ref{fig:SATLab-Indep-deltatp-c2}
and \ref{fig:SATLab-Non-indep-deltatp-c4} suggested us that on real
solutions $\delta_{p,q}=$$\frac{1}{1+\omega^{p-q}}$. So we tried
to evade the $\omega=1$ case.

We plotted $\ln\ln\frac{F_{2}}{F_{1}^{2}}$ for different values of
$\omega$, at a point satisfying constraints \ref{eq:beta_u}, \ref{eq:beta_t},
\ref{eq:constr-surface-SM}, \ref{eq:c1}, \ref{eq:c2}, \ref{eq:c3},
\ref{eq:c4} and \ref{eq:c5}, for the best choice of the $\beta_{t}$'s
that we found complying with constraint \ref{eq:constr-surface-FM}.
The expected value is $-\infty$ iff $\frac{F_{2}}{F_{1}^{2}}=1$.
We set the ratio $c=0.1$ (to be compared with $2.838$, where $\omega=1$
works). Only $\omega=1$ seems to make $\frac{F_{2}}{F_{1}^{2}}=1$,
cf. figure \ref{fig:delta_tp-omega=00003D1}.

\begin{figure}
\begin{centering}
\includegraphics[width=0.64\columnwidth]{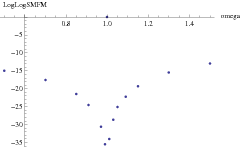}
\par\end{centering}

\caption{\label{fig:delta_tp-omega=00003D1}Setting $\delta_{p,q}=$$\frac{1}{1+\omega^{p-q}}$
makes the Second Moment Method work iff $\omega=1$.}

\end{figure}

\subsection{Confrontation with Reality \label{sec:dpq-Confrontation-with-Reality}}

We are going to do the same kinds of observations through \satlab{}
as in section \ref{sec:Confrontation-SM-vs-Reality} in order to figure
out why the Second Moment Method still fails to give high upper bounds
in the distributional model.

\subsubsection{Non-Independence of Values \label{sub:dpq-Non-Independence-of-Values}}

We focus our attention on variables with $T$ occurrences among which
$U$ are positive. We denote by $d$ the average proportion of those
variables assigned $1$ by a solution and $u$ the average proportion
of those variables assigned $0$ and $1$ by a couple of distinct
solutions. In a given sample of random solutions, we took all couples
of different solutions and computed the following three quantities:
$d$, $d\_d=d\left(1-d\right)$ and $u$. At independence we should
have $u=d\_d$, which happens for $c=2$ (cf. figure \ref{fig:SATLab-Indep-deltatp-c2})
but not for $c=4$ (cf. figure \ref{fig:SATLab-Non-indep-deltatp-c4}).

Moreover we can see that $d$ is almost linear in $U$ when $c=2$
but it curves when $c=4$. Note also that the range of $U$ may be
strictly included in $\left[0\dots T\right]$ (cf. figure \ref{fig:SATLab-Non-indep-deltatp-c4}).
Determining the shape of $d$ might help do better calculations of
first and second moments, even though in section \ref{sub:Attempts-to-evade-dpq-1/2}
we took $\delta_{p,q}=$$\frac{1}{1+\omega^{p-q}}$ but it could not
make the Second Moment Method work. It is clear however that the condition
$\delta_{p,q}=\frac{1}{2}$ we encountered in section \ref{sub:Condition-for-vw-dpq}
does not hold on real solutions.

\begin{figure}
\begin{centering}
\includegraphics[width=0.64\columnwidth]{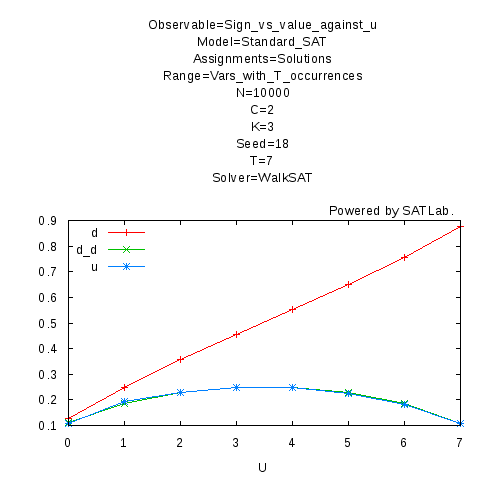}
\par\end{centering}

\caption{\label{fig:SATLab-Indep-deltatp-c2}$\delta_{p,q}$ and $\mu_{p,q}=\delta_{p,q}\left(1-\delta_{p,q}\right)$
at $c=2$.}

\end{figure}

\begin{figure}
\begin{centering}
\includegraphics[width=0.64\columnwidth]{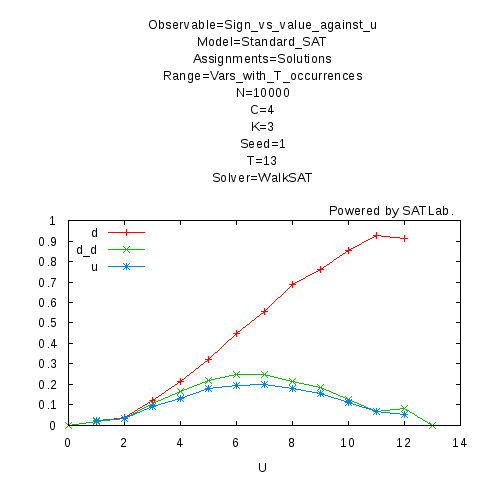}
\par\end{centering}

\caption{\label{fig:SATLab-Non-indep-deltatp-c4}$\delta_{p,q}$ and $\mu_{p,q}\neq\delta_{p,q}\left(1-\delta_{p,q}\right)$
at $c=4$.}

\end{figure}

\subsubsection{Non-Independence of Surfaces}

We perform the same experiment as in section \ref{sub:Non-Independence-of-Surfaces},
but we restrict surfaces to variables having $T$ occurrences among
which $U$ are positive. On figure \ref{fig:SATLab-Non-independence-T-U}
we can see that there is still no independence of surfaces, although
the restriction of the surfaces to these variables curbs the non-independence.

\begin{figure}
\begin{centering}
\includegraphics[width=0.64\columnwidth]{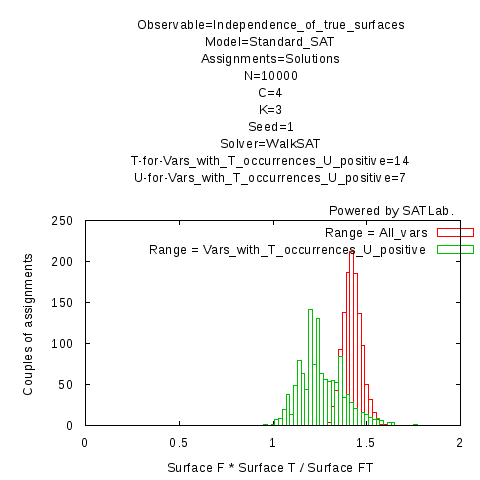}
\par\end{centering}

\caption{\label{fig:SATLab-Non-independence-T-U}The distributional model slightly
curbs the non-independence of surfaces.}

\end{figure}

\section{Conclusion on the Second Moment Method}

Contrary to the First Moment Method, which works more or less finely,
the Second Moment Method will not always work. Furthermore, it is
rather difficult to make it work, and we were able to make it work
only under very artificial conditions with respect to reality. Moreover,
even when it works, we have not been able to find strong lower bounds
with it. We got stuck at $2.83$ for 3 different models (standard
drawing model, implicants and standard distributional model).

However we did not prove that it is impossible to find better lower
bounds with our general framework, this is just numerical experiments.
Moreover our framework may not be perfect, perhaps the parameters
we consider are not relevant for the Second Moment Method, so there
is still hope in making the Second Moment Method work and give higher
lower bounds on the threshold of \threesat.

\bibliographystyle{plain}
\bibliography{../library}

\begin{thebibliography}{1}

\bibitem{Achlioptas2002}
Dimitris Achlioptas and Cristopher Moore.
\newblock {The asymptotic order of the random k-SAT threshold}.
\newblock In {\em The 43rd Annual IEEE Symposium on Foundations of Computer
  Science}, pages 779--788. IEEE Computer Society, 2002.

\bibitem{Achlioptas2004}
Dimitris Achlioptas and Yuval Peres.
\newblock {The Threshold for Random k-SAT is 2\^{}k ln2 - O(k)}.
\newblock {\em JAMS: Journal of the American Mathematical Society},
  17:947--973, 2004.

\bibitem{Boufkhad1999}
Yacine Boufkhad and Olivier Dubois.
\newblock {Length of prime implicants and number of solutions of random CNF
  formulae}.
\newblock {\em Theoretical Computer Science}, 215(1-2):1--30, 1999.

\bibitem{Diaz2009}
J.~D\'{\i}az, Lefteris~M. Kirousis, D.~Mitsche, and X.~P\'{e}rez-Gim\'{e}nez.
\newblock {On the satisfiability threshold of formulas with three literals per
  clause}.
\newblock {\em Theoretical Computer Science}, 410(30-32):2920--2934, 2009.

\bibitem{Friedgut1999}
Ehud Friedgut and J.~Bourgain.
\newblock {Sharp thresholds of graph properties, and the k-sat problem}.
\newblock {\em Journal of the American Mathematical Society}, 12(4):1017--1054,
  1999.

\bibitem{Hajiaghayi2003}
M.T. Hajiaghayi and G.B. Sorkin.
\newblock {The satisfiability threshold of random 3-SAT is at least 3.52}.
\newblock {\em IBM Research Report RC22942}, 2003.

\bibitem{Hugel2010a}
Thomas Hugel.
\newblock {SATLab}, 2010.

\bibitem{Kaporis2006}
A.C. Kaporis, Lefteris~M. Kirousis, and E.G. Lalas.
\newblock {The probabilistic analysis of a greedy satisfiability algorithm}.
\newblock {\em Random Structures and Algorithms}, 28(4):444--480, 2006.

\end{thebibliography}

\end{document}